\definecolor{blue}{rgb}{0,0,1}
\definecolor{red}{rgb}{1,0,0}
\newcommand{\I}{\mathrm{i}}
\newcommand{\N}{\mathbb{N}}
\newcommand{\RR}{\mathbb{R}}
\newcommand{\M}{\mathcal{M}}
\newcommand{\K}{\mathcal{K}}
\newcommand{\ELL}{\mathcal{L}}
\newcommand{\EE}{\mathcal{E}}
\newcommand{\EEn}{\mathcal{E}_{n}}
\newcommand{\EEx}{\mathcal{E}^{\times}}
\newcommand{\EExn}{\mathcal{E}^{\times}_{n}}
\newcommand{\EEvn}{\mathcal{E}^{\vee}_{n}}
\newcommand{\DD}{\mathcal{D}}
\newcommand{\DDz}{\mathcal{D}{\setminus}\{0\}}
\newcommand{\DDn}{\mathcal{D}_{n}}
\newcommand{\DDx}{\mathcal{D}^{\times}}
\newcommand{\DDxn}{\mathcal{D}^{\times}_{n}}
\newcommand{\DDvn}{\mathcal{D}^{\vee}_{n}}
\newcommand{\Wvn}{W^{\vee}_{n}}
\newcommand{\CC}{\mathscr{C}}
\newcommand{\Sym}{\mathop{\mathrm{Sym}}\nolimits}
\newcommand{\Piv}{\Pi^{\vee}}
\newcommand{\Ran}{\mathop{\mathrm{Ran}}}
\newcommand{\Map}{\mathop{\mathrm{Map}}}
\newcommand{\Span}{\mathop{\mathrm{Span}}}
\DeclareMathSymbol{\shortminus}{\mathbin}{AMSa}{"39}
\newcommand{\defin}[1]{\textit{\boldmath\textbf{#1}}}
\long\def\@makefntext#1{\parindent 1em\noindent 
 \makebox[1em][l]{\footnotesize\rm$\m@th{}^{\arabic{footnote}}$}%
 \footnotesize\rm #1}
\def\@makefnmark{\hbox{$
{}^{\arabic{footnote}}\m@th$}}
\def\@thefnmark{\arabic{footnote}}
\theoremstyle{plain}
\newtheorem{theorem}{Theorem}
\newtheorem{lemma}[theorem]{Lemma}
\newtheorem{corollary}[theorem]{Corollary}
\theoremstyle{thmstylethree}
\newtheorem{definition}[theorem]{Definition}
\newtheorem{remark}[theorem]{Remark}
\begin{document}

\title[On the running and the UV limit of Wilsonian renormalization group flows]{On the running and the UV limit of Wilsonian renormalization group flows}

\author{Andr\'as L\'aszl\'o}
\address{
Dep.\ of High Energy Physics, HUN-REN Wigner Research Centre for Physics\\
Konkoly-Thege M u 29-33, 1121 Budapest, Hungary
}
\ead{laszlo.andras@wigner.hun-ren.hu}

\author{Zsigmond Tarcsay}
\address{
Dep.\ of Mathematics, Corvinus University of Budapest\\
F\H ov\'am t\'er 13-15, 1093 Budapest, Hungary\\
and Dep.\ of Appl.\ Analysis and Comp.\ Mathematics, E\"otv\"os University\\
P\'azm\'any P\'eter s\'et\'any 1/C, 1117 Budapest, Hungary
}
\ead{zsigmond.tarcsay@uni-corvinus.hu}

\begin{abstract}
In nonperturbative formulation of quantum field theory (QFT), 
the vacuum state is characterized by the Wilsonian renormalization group (RG) 
flow of Feynman type field correlators. 
Such a flow is a parametric family of ultraviolet (UV) regularized field correlators, 
the parameter being the strength of the UV regularization, and the instances with different strength of UV regularizations 
are linked by the renormalization group equation (RGE). 
Important RG flows are those which reach out to any UV regularization strengths. 
In this paper it is shown that for these flows a natural, mathematically rigorous generally covariant 
definition can be given, and that they form a topological vector space which is 
Hausdorff, locally convex, complete, nuclear, semi-Montel, Schwartz. 
That is, they form a generalized function space having favorable properties, similar to multivariate distributions. 
The other theorem proved in the paper is that for Wilsonian RG flows 
reaching out to all UV regularization strengths, a simple factorization formula holds 
in case of bosonic fields over flat (affine) spacetime: the flow always originates from a 
regularization-independent distributional correlator, and its running 
satisfies an algebraic ansatz. 
The conjecture is that this factorization theorem should generically hold, 
which is worth future investigations.
\end{abstract}

\noindent{\it Keywords}: Wilsonian renormalization, renormalization group flow, distribution theory, generally covariant

\maketitle

\setcounter{footnote}{0}

\section{Introduction}
\label{secIntro}

The mathematically sound formulation of interacting quantum field theory 
(QFT) is a long pursued subject \cite{Todorov1990,Henneaux1994,Fewster2020}. 
Despite the difficulties encountered 
with the mathematization of the generic theory over continuum spacetimes, 
several gradual successes were reached in the past decades with the perturbative approach. 
A subfield of constructive mathematical QFT, called 
perturbative algebraic QFT (pAQFT) emerged during the past decades 
\cite{Hollands2002,Dutsch2001,Brunetti2003,Dabrowski2014,Dutsch2019}. 
In that framework the key mathematical problematics 
is the handling of spacetime pointwise products of distributional fields (propagators). 
Using advanced distribution theory, it was understood that two 
mathematical techniques are key for that. 
The so-called H\"ormander wave front set criterion is used as a sufficient 
condition on the multiplicability of distributions. Whenever that is not 
enough, an extendability theorem of distributions to singularity points is used, 
given that they have appropriate behavior (finite scaling degree) 
against spacetime stretching around those points by some control scale. 
This latter technique is relied upon, when mathematizing the usual perturbative renormalization 
of informal QFT: the coupling constants of the classical model 
are replaced by functionals of a length (or frequency) control scale, and the 
most stubborn divergences of the theory are absorbed via the running of the couplings. 
Thus, an avenue opened for formalizing the notion of perturbative 
renormalization group (pRG). 
An important milestone was the proof of perturbative renormalizability of 
Yang--Mills interactions over fixed globally hyperbolic spacetimes \cite{Hollands2008}. 
A generally covariant pAQFT framework along with a corresponding pRG formalism 
was developed partly motivated by that \cite{Costello2011,Angelo2024}.

On the rigorous \emph{nonperturbative} formulation of QFT, much less is known. 
The consensus is that for a constructive approach, the Feynman functional 
integral formulation, or a rigorous analogy of that, is needed 
\cite{Glimm1987,Velhinho2017,Albeverio2008,Gill2008,Montaldi2017}. 
That approach aims to synthetize the (possibly non-unique) vacuum 
correlators of a QFT model as the moments (or formal moments) of the 
Feynman measure (or a rigorous analogy of that), derived from a classical 
action. For interacting models, however, that approach again runs into the issue of divergences 
caused by the problematics of the multiplication of distributional fields. 
Wilson and contemporaries addressed this by the Wilsonian regularization, 
i.e.\ considering Feynman functional 
integral on a smaller subspace, namely on ultraviolet (UV) damped fields. 
Since such a subspace is obtained via coarse-graining, i.e.\ local averaging 
of fields, physicswise it is natural to require instances with 
subsequent coarse-grainings to be compatible with each-other, 
thus the notion of Wilsonian renormalization group (RG) emerged 
\cite{Wilson1975,Wilson1983,Peskin1995,Polonyi2003,Polonyi2008,Srednicki2007,Skinner2018,Dupuis2021,Kopietz2010,Bauerschmidt2019}. 
A Feynman measure instance with a given UV regularization 
is linked to a stronger UV regularized instance by ``integrating~out'' 
high frequency modes in between, 
called to be the Wilsonian renormalization group equation (RGE). 
``Integrating~out'' high frequency modes means taking the pushforward measure 
by a field coarse-graining operator, or in probability theory speak, taking the marginal measure along that. 
As it is well known, the definition of a genuine Feynman measure is problematic 
in Lorentz signature, and especially in a generally covariant setting 
\cite{Hollands2008,Feldbrugge2017a,Feldbrugge2017b,Baldazzi2019,Feldbrugge2023}. 
In order to mitigate this issue, the Feynman measure formulation and the 
corresponding RGE is usually translated to the language of formal moments, 
i.e.\ to the collection of Feynman type $n$-field correlators ($n=0,1,2,\dots$). 
That description is meaningful in arbitrary signature and also in a generally covariant setting. 
In the present paper, we prove structural theorems regarding the space of 
Wilsonian RG flows of Feynman correlators.\footnote{As 
in the usual Feynman functional integral formulation, any reference to non-Feynman 
type field correlators or to other eventually present external data, 
such as a fixed background Lorentzian causal structure, time ordering, etc, 
will be deliberately avoided.}

First, we recall the mathematical reason why in the case of 
interacting theories one is forced to define the Wilsonian regularized 
Feynman measure instead of just a Feynman measure, even in an Euclidean signature setting 
(see a concise review in \cite{Velhinho2017}). 
Take an Euclidean classical field theory, and assume that its action 
functional can be split as $S=T+V$, with $T$ being a quadratic positive definite kinetic 
term and $V$ being a higher than quadratic degree interaction term bounded from 
below.\footnote{For instance, a typical kinetic term can be an 
Euclidean Klein--Gordon term $T(\varphi)=\int \varphi\,(-\Delta+m^{2})\varphi$, 
whereas a typical interaction term can be like $V(\varphi)=g\int\varphi^{4}$.} 
Assume moreover, that the underlying spacetime manifold is an 
affine space (i.e., $\sim\RR^{N}$) so that Schwartz's functions and tempered distributions are defined, 
or alternatively, assume that the base manifold is compact 
(with regular enough boundary). 
Then, by means of Bochner--Khinchin theorem, the kinetic 
term $T$ induces a Gaussian measure $\gamma_{T}$ on the space of (tempered) 
distributional fields, see e.g.\ \cite{Velhinho2017}~Corollary~1 and its explanation on this well-known result. 
This Gaussian measure is a proper non-negative valued finite Borel measure 
under the above assumptions, devoid of any issues, and it is the Feynman 
measure of the non-interacting model. 
It is customary to write $\int(\dots)\,\mathrm{d}\gamma_{T}(\phi)$ informally as 
$\int(\dots)\,\e^{-T(\phi)}\,\mathrm{d}\phi$, as if a Lebesgue (volume) measure on 
the fields were meaningful, where the integration variable $\phi$ runs 
over the distributional fields. 
It is not difficult to show that the function 
$\e^{-V}$ is Borel-measurable on the space of smooth fields, and is bounded. 
One is tempted thus to define the Feynman measure $\mu$ of the interacting theory 
to be the product of the density function $\e^{-V}$ and the Gaussian measure $\gamma_{T}$, meaning that 
$\int(\dots)\,\mathrm{d}\mu(\phi):=\int(\dots)\,\e^{-V(\phi)}\,\mathrm{d}\gamma_{T}(\phi)$ by 
the tentative definition. 
The well-known obstacle to this attempt is the fact that $\gamma_{T}$ lives on the space of distributional fields, 
whereas $\e^{-V}$ can only be evaluated on the space of function sense fields, 
since the interaction term contains spacetime integrals of point-localized products of 
fields. In order to bring $\e^{-V}$ and $\gamma_{T}$ to common grounds, 
one needs to bring the measure $\gamma_{T}$ to the space of function sense fields. 
This naturally forces one to introduce the notion of Wilsonian regularized Feynman functional 
integral. Namely, one needs to take some \emph{coarse-graining} operator $C$, which 
is a continuous linear map from the distributional fields to the smooth function 
sense fields.\footnote{Over affine spacetime, if one requires $C$ to respect 
the translation invariance, it will simply be a convolution operator by a test function. 
Equivalently, it corresponds to a UV damping in momentum space, as Wilson and contemporaries originally formulated. 
On manifolds, the precise notion and properties of coarse-grainig operators will be recalled in Section~\ref{secWilson}.}
The image space $\Ran(C)$ of $C$ corresponds to a space of UV damped fields, 
which is by construction, some subspace of the smooth function sense fields. 
The pushforward of $\gamma_{T}$ by $C$, denoted by $C_{*}\gamma_{T}$, is a 
finite Borel measure on $\Ran(C)$. Thus, the function $\e^{-V}$ will be 
integrable against this Wilsonian regularized Gaussian measure $C_{*}\gamma_{T}$, 
and therefore the product $\e^{-V}\,C_{*}\gamma_{T}$ meaningfully defines a 
finite Borel measure on $\Ran(C)$. 
That is the \emph{Wilsonian regularized Feynman measure} for the interacting theory, 
at a fixed regularization. Having pinned down this notion, 
given a family $(V_{C})_{C\in\{\mathrm{coarse{\shortminus}grainings}\}}$ of interaction terms 
one can define the corresponding family 
$(\mu_{C})_{C\in\{\mathrm{coarse{\shortminus}grainings}\}}$ of Wilsonian regularized interacting 
Feynman measures, by setting $\mu_{C}:=\e^{-V_{C}}\,C_{*}\gamma_{T}$. 
Such a family is then called a \emph{Wilsonian RG flow reaching out to all UV regularization 
strengths}\footnote{One could also formalize the notion of Wilsonian RG flows which do not 
reach out to all UV regularization strengths. These can be important for 
encoding surely non-renormalizable QFT models. In this paper, however, 
we do not address the mathematical theory of these.} 
whenever there exists a real valued functional $z$ of coarse-grainings, such that 
for all coarse-grainings $C$, $C'$, $C''$ satisfying $C''=C'\,C$, one has that 
the measure $z(C'')_{*}\,\mu_{C''}$ is the pushforward of the measure 
$z(C)_{*}\,\mu_{C}$ by $C'$, where $z(C)_{*}$ and $z(C'')_{*}$ denote the pushforward 
by the field rescaling operation by the real numbers $z(C)$ and $z(C'')$. 
The functional $z$ is called the running wave function renormalization 
factor.\footnote{The wave function renormalization factor $z$ has to be invoked flavor 
sectorwise, if the field theory is based on particle fields composed of 
multiple flavor sectors.} 
The measures $z(C)_{*}\,\mu_{C}$ and $z(C'')_{*}\,\mu_{C''}$ are nothing but the 
Wilsonian regularized interacting Feynman measures re-expressed on the rescaled fields. 
The intermediary pushforward by $C'$ is the rigorous formulation of ``integrating out'' 
intermediate frequency modes between $C$ and $C''$. That is, in a 
Wilsonian RG flow one proceeds from the UV toward the infrared (IR) 
by applying subsequent coarse-graining operators. 
A less formalism-heavy equivalent definition is the following:
\begin{eqnarray}
 \exists\,\mathrm{real\; valued\; functional }\; z \;\mathrm{ of\; coarse{\shortminus}grainings}: \cr
 \Big. \quad \forall\,\mathrm{coarse{\shortminus}grainings }\; C,C',C'' \;\mathrm{ with }\; C''=C'\,C : \cr
 {}_{\Bigg.} \qquad \forall\,\mathrm{real\; valued\; functional\; (``observable")\; } O \;\mathrm{of\; smooth\; fields} \,: \cr
 \qquad \int\limits_{\varphi''\in \Ran(C'')} O({\textstyle\frac{1}{z(C'')}}\varphi'') \;\mathrm{d}\mu_{C''}(\varphi'') \;=\; \int\limits_{\varphi\in \Ran(C)} O(C'\,{\textstyle\frac{1}{z(C)}}\varphi) \;\mathrm{d}\mu_{C}(\varphi)
\label{eqmuzC}
\end{eqnarray}
holds. 
An RG flow of Feynman measures can be equivalently described via their 
Fourier transforms, being the usual partition function
\begin{eqnarray}
 Z_{C}(J) := \int\limits_{\varphi\in \Ran(C)} \e^{\I\,(J\vert\varphi)} \;\mathrm{d}\mu_{C}(\varphi) 
 = \int\limits_{\varphi\in \Ran(C)} \e^{\I\,(J\vert\varphi)} \;\e^{-V_{C}(\varphi)}\; \mathrm{d}(C_{*}\gamma_{T})(\varphi) \cr
 \qquad\qquad\qquad\qquad\qquad \left[ = \int\limits_{\varphi\in \Ran(C)} \e^{\I\,(J\vert\varphi)} \;\e^{-(T_{C}+V_{C})(\varphi)}\; \mathrm{d}\varphi \right] ,
\label{eqZC}
\end{eqnarray}
where $J$ runs over the compactly supported distributions (``currents''), and the 
expression in the square brackets is the customary informal presentation, as if a Lebesgue (volume) measure on 
$\Ran(C)$ were meaningful. 
The Wilsonian RGE in terms of the partition function reads as 
$Z_{C''}({\textstyle\frac{1}{z(C'')}}\,J) \;=\; Z_{C}(C'^{\,t}\,{\textstyle\frac{1}{z(C)}}\,J)$, 
referring to the notations of Eq.(\ref{eqmuzC}), where $C'^{\,t}$ denotes the transpose of $C'$. 
Finally, when re-expressed in terms of moments, the Wilsonian RGE reads
\begin{eqnarray}
 \Big.\exists\,\mathrm{real\; valued\; functional }\; z \;\mathrm{ of\; coarse{\shortminus}grainings}: \cr
 \quad \forall\,\mathrm{coarse{\shortminus}grainings }\; C,C',C'' \;\mathrm{ with }\; C''=C'\,C\,: \cr
 \qquad \bigg. z(C'')^{n}\;\mathcal{G}^{(n)}_{C''}\,\;=\;\,z(C)^{n}\;{\otimes}^{n}C'\,\mathcal{G}^{(n)}_{C} \quad(n=0,1,2,\dots).
\label{eqGzC}
\end{eqnarray}
Here, for any given coarse-graining $C$ the symbol 
$\mathcal{G}_{C}:=(\mathcal{G}_{C}^{(0)},\mathcal{G}_{C}^{(1)},{\dots})$ denotes 
the collection of moments of the Wilsonian regularized Feynman measure 
$\mu_{C}$, 
moreover $\otimes^{n}C'\,\mathcal{G}_{C}^{(n)}$ means the application of $C'$ to each variable of 
$\mathcal{G}_{C}^{(n)}$. 
It follows immediately from Eq.(\ref{eqGzC}), that each moment $\mathcal{G}_{C}^{(n)}$ have to be smooth function of the $n$-fold copy 
of the spacetime manifold.

In arbitrary, e.g.\ Lorentzian signatures and in a generally covariant 
setting, genuine Feynman measures in the above sense are known to be problematic: 
rather the collection of formal moments, i.e.\ the Feynman type $n$-field correlators are taken 
as the fundamental object of interest. Their Wilsonian RG flows are formulated 
by requiring Eq.(\ref{eqGzC}), as a definition of the RGE. 
In this paper we prove two statements on the space of such flows. 
\emph{Statement~(A)}: over generic spacetime manifolds, the space of rescaled correlators 
$z(C)^{n}\,\mathcal{G}_{C}^{(n)}$ ($C\in\{\mathrm{coarse{\shortminus}grainings}\}$) 
of these flows form a topological vector space, 
which is Hausdorff, locally convex, complete, nuclear, semi-Montel and Schwartz. 
That is, they form a generalized function space having favorable properties similar 
to that of $n$-variate distributions. 
Quite evidently, the pertinent space of flows is nonempty, as for any fixed $n$-variate 
distribution $G^{(n)}$, the family defined by the ansatz
\begin{eqnarray}
 \mathcal{G}^{(n)}_{C}=z(C)^{-n}\,{\otimes^{n}C\,G^{(n)}} \qquad (\;C\in\{\mathrm{coarse{\shortminus}grainings}\}\;)
\label{eqAnsatz}
\end{eqnarray}
solves the RGE Eq.(\ref{eqGzC}). It is not evident however from first principles, 
that this ansatz would be exhaustive.\footnote{For instance, 
in the space of Colombeau generalized functions, 
the subspace corresponding to ordinary distributions is known not to saturate the full 
space. Colombeau generalized functions 
\cite{Colombeau1984} was an early attempt to formalize RG flows in perturbative renormalization theory.} 
The second main result of the paper, called \emph{statement~(B)}, 
is that the ansatz Eq.(\ref{eqAnsatz}) is in fact exhaustive for 
QFT models of bosonic fields over an affine (i.e.,\ flat) spacetime. 
Statement~(A) indicates that statement~(B) might be generically true, 
not only for bosonic fields and flat spacetime, but we were not yet able to 
construct a formal proof for that, therefore is worth for future investigations.

The factorization formula of statement~(B) also implies that, under the given conditions, 
the rescaled correlators can only have UV singularities which are at worst distributional, 
and that is rather non-evident to see directly from first principles. 
In QFT terms, one can phrase it like this: under the given 
conditions, a Wilsonian RG flow reaching out to all UV regularization 
strengths is nonperturbatively multiplicatively renormalizable, i.e.\ there 
exists some regularization-independent distributional correlator $G^{(n)}$ ($n{=}0,1,2,{\dots}$) 
such that Eq.(\ref{eqAnsatz}) holds. 
Strictly speaking, up to now, the existence of such distributional correlator 
describing the UV infinity end of an RG flow has only been shown for 
low dimensional toy models, such as sin--Gordon or sinh--Gordon models 
\cite{Frohlich1975,Frohlich1977,Smirnov1992,Lukyanov1997,Kozlowski2023}. 
Statement~(B) says that this penomenon is generic for QFT models admitting 
flows reaching out to all UV regularization strengths.\footnote{Knowing 
the factorization formula Eq.(\ref{eqAnsatz}) in advance can come 
useful when solving for the flow of correlators of a QFT model. The regularized 
correlators must solve the Wilsonian RGE Eq.(\ref{eqGzC}) along with 
the Wilsonian regularized master Dyson--Schwinger equation \cite{Laszlo2022}, 
as an equation of motion. Equivalently, in transformed variables, they must 
solve the better known Wetterich equation \cite{Angelo2024}. 
Statement~(B) transforms these functional PDE type equations to algebraic equations.}

The structure of the paper is as follows. In Section~\ref{secWilson} the mathematical 
definition of the coarse-graining operators and of the $n$-variate Wilsonian type generalized 
functions is recalled from \cite{Laszlo2022}, 
moreover statement~(A) is proved. 
In Section~\ref{secBF} statement~(B) is proved. 
In Section~\ref{secConclusion} the ramifications of these findings in QFT is discussed. 
The proofs heavily rely on the mathematical theory of topological vector 
spaces. Therefore, the paper is closed by \ref{secApp}, summarizing some important 
facts on the theory of distributions and topological vector spaces.

\section{Wilsonian type generalized functions}
\label{secWilson}

In this section, let us denote by $\M$ an arbitrary finite dimensional smooth 
orientable and oriented manifold with or without boundary, 
modeling a generic spacetime manifold. If with boundary, the so-called cone condition 
is assumed for it, so that the Sobolev and Maurin compact embedding theorems hold over 
local patches. 
Whenever $V(\M)$ is some finite dimensional real vector bundle over $\M$, the notation 
$V^{\times}(\M):=V^{*}(\M)\otimes \left(\wedge^{\dim(\M)}T^{*}(\M)\right)$ will be used 
for its densitized dual vector bundle. For two vector bundles $V(\M)$ and 
$U(\mathcal{N})$ over base manifolds $\M$ and $\mathcal{N}$, the notation $V(\M)\boxtimes U(\mathcal{N})$ 
will be used for their external tensor product, which is then a vector bundle over the base $\M\times\mathcal{N}$. 
The shorthand notation $\EEn$ and $\EExn$ shall be used for the 
smooth sections of $\boxtimes^{n}V(\M)$ and of $\boxtimes^{n}V^{\times}(\M)$ ($n\in\N_{0}$), respectively, 
with their canonical $\EE$ type smooth function topology. It is common knowledge 
that since the Sobolev and Maurin embedding theorems hold locally, these spaces 
are nuclear Fr\'echet (NF) spaces. 
Their corresponding topological strong dual spaces, 
denoted as usual by $\EEn'$ and $\EExn{}'$, are dual nuclear Fr\'echet (DNF) spaces, 
being the spaces of corresponding compactly supported distributions. The symbols 
$\DDn$ and $\DDxn$, as usual, will denote the corresponding compactly 
supported smooth sections (test sections), with their canonical $\DD$ type 
test function topology. These are known to be also NF spaces when $\M$ is 
compact, and if $\M$ is noncompact they are known to be countable strict inductive 
limit with closed adjacent images of NF spaces (also called LNF spaces), the inductive 
limit taken for an increasing countable covering by compact patches of $\M$. 
Their corresponding topological strong dual spaces, 
denoted as usual by $\DDn'$ and $\DDxn{}'$, are dual LNF (DLNF) spaces, 
being the spaces of corresponding distributions. 
One has the canonical continuous linear embeddings $\EEn\subset\DDxn{}'$ 
and $\DDn\subset\EExn{}'$. Rather obviously, we will use the shorthand 
$\EE=\EE_{1}$, $\DD=\DD_{1}$ etc, respectively.

\begin{remark}
The notion of coarse-graining operators is invoked as follows \cite{Hormander2007,Shubin2001,Radzikowski1996,Laszlo2022}.
\begin{enumerate}[(i)]
 \item A continuous linear map $C:\,\EEx{}'\rightarrow\EE$ is called a \defin{smoothing operator}. 
By means of the Schwartz kernel theorem over manifolds, there is a corresponding unique smooth section $\kappa$ of 
$V(\M)\boxtimes V^{\times}(\M)$, such that 
$\forall\,\varphi\in\DD,\,x\in\M:\;(C\, \varphi)(x)=\int_{y\in\M}\kappa(x,y)\,\varphi(y)$ holds. 
Thus, one may write $C_{\kappa}$ in order to emphasize this.
 \item A smoothing operator $C_{\kappa}$ is called \defin{properly supported} (or partially compactly supported), 
whenever for all compact $\K\subset\M$, the closure of the sets 
$\left\{(x,y)\in\M\times\M\,\vert\,x\in\K,\,\kappa(x,y)\neq0\right\}$ and $\left\{(x,y)\in\M\times\M\,\vert\,y\in\K,\,\kappa(x,y)\neq0\right\}$
are compact. 
A properly supported smoothing operator $C_{\kappa}$ can be considered 
as continuous linear operator 
$\DD\rightarrow\DD$, $\EE\rightarrow\EE$, $\EEx{}'\rightarrow\EEx{}'$, $\DDx{}'\rightarrow\DDx{}'$, 
moreover as continuous linear operator
$\EEx{}'\rightarrow\EE$, $\DDx{}'\rightarrow\EE$, 
$\EEx{}'\rightarrow\DD$, 
respectively. 
Moreover, one can construct the corresponding \defin{formal transpose kernel} $\kappa^{t}$, 
being a section of $V^{\times}(\M)\boxtimes V(\M)$, which will invoke a 
properly supported smoothing operator $C_{\kappa^{t}}$ when exchanging $V(\M)$ versus $V^{\times}(\M)$ in their role. 
The space of properly supported smoothing operators inherit the natural convergence vector space structure 
from the spaces $\DD$ and $\DDx$ (\cite{Laszlo2022}~Appendix~B). Therefore, one can speak about 
sequentially continuous maps going from the space of properly supported smoothing 
operators to other convergence vector spaces, e.g.\ to the reals. 
By construction, if $\M$ were an affine space, the convolution operator 
by a real valued test function would be a properly supported smoothing operator (with translationally invariant kernel).
 \item A properly supported smoothing operator $C_{\kappa}$ is called \defin{coarse-graining operator} 
and its kernel $\kappa$ a \defin{mollifying kernel} iff 
$C_{\kappa}:\,\EEx{}'\rightarrow\DD$ and $C_{\kappa^{t}}:\,\EE'\rightarrow\DDx$ are injective. 
For instance, if $\M$ were an affine space, then the convolution operator 
by a real valued nonzero test function would be a coarse-graining operator, since by 
means of the Paley--Wiener--Schwartz theorem (\cite{Hormander1990}~Theorem7.3.1) 
it is injective on the above spaces of compactly supported distributions.
\end{enumerate}
\end{remark}

The above notion of coarse-graining operator generalizes the notion of 
convolution operators by test functions on affine spaces to generic 
manifolds.

\begin{remark}
A natural partial ordering is present on coarse-graining operators \cite{Laszlo2022}.
\begin{enumerate}[(i)]
 \item Given two coarse-graining operators $C_{\kappa}$ and $C_{\lambda}$, it is 
said that $C_{\kappa}$ is \defin{less ultraviolet (UV) than} $C_{\lambda}$, in notation 
$C_{\kappa}\preceq C_{\lambda}$, iff $C_{\kappa}=C_{\lambda}$ or there exists 
a coarse-graining operator $C_{\mu}$ such that $C_{\kappa}=C_{\mu}\,C_{\lambda}$ holds. 
This relation by construction is reflexive and transitive. Moreover, it is natural in the sense that it is diffeomorphism invariant 
(or more precisely, it is invariant to $V(\M)\rightarrow V(\M)$ vector bundle automorphisms). 
In the case of affine $\M$, the pertinent relation is also natural on the space of 
convolution operators by test functions: it is invariant to the affine transformations of $\M$.
 \item In \cite{Laszlo2022}~Appendix~B it is shown that $\preceq$ is also antisymmetric, 
i.e.\ is a partial ordering. A rather direct proof can be also given to its antisymmetry 
in the special case of convolution operators on affine spaces, via restating the antisymmetry on 
the Fourier transforms, and using the Paley--Wiener--Schwartz theorem 
in combination with the Riemann--Lebesgue lemma (\cite{Blanchard2015}~Ch10.1~Lemma10.1).
\end{enumerate}
\end{remark}

In order to construct a proof for statement~(A) of Section~\ref{secIntro}, 
we now define the space of rescaled $n$-field correlators obeying the 
Wilsonian RGE Eq.(\ref{eqGzC}). 
Referring to the notations of Section~\ref{secIntro}, a rescaled correlator shall be the product 
$w^{(n)}_{C}:=z(C)^{n}\,\mathcal{G}^{(n)}_{C}$ ($C$ being a coarse-grainig), 
with $z$ and $\mathcal{G}$ obeying Eq.(\ref{eqGzC}). 
That is, the wave function renormalization factor is merged notationally into 
the regularized correlator, and then the space of these rescaled correlators 
will be studied. The formal definition goes as follows, with somewhat 
simplified notations.

\begin{definition}
Denote by $\CC$ the space of coarse-graining operators (or equivalently, of mollifying kernels), 
and let $n\in\N_{0}$. Then, the set of maps
\begin{eqnarray}
 W_{n} \;:=\; 
 \big\{\, {w{:}\,\CC{\rightarrow}\EEn} \,\big\vert\, \forall \kappa,\lambda\in \CC,\,\kappa\preceq\lambda\,\mathrm{(with}\,C_{\kappa}=C_{\mu}C_{\lambda},\,\mu\in\CC\mathrm{)}:\cr
        \qquad\qquad\qquad\qquad\qquad\qquad\qquad\qquad\qquad\qquad w(\kappa)={\otimes}^{n}C_{\mu}\,w(\lambda) \,\big\}
\end{eqnarray}
is called the space of \defin{$n$-variate Wilsonian generalized functions}.
\label{defWn}
\end{definition}

Clearly, the above definition formalizes the space of Wilsonian renormalization 
group flows of $n$-variate smooth functions, as outlined in Section~\ref{secIntro}.

\begin{theorem}
$W_{n}$ is a vector space over $\RR$.
There is a natural linear map
\begin{eqnarray}
 j:\quad \DDxn{}'\longrightarrow W_{n},\quad \;\omega\longmapsto\widehat{\omega},
 \qquad \mathrm{with }\;\, \widehat{\omega}(\kappa):={\otimes}^{n}C_{\kappa}\,\omega \quad (\forall\,\kappa\in\CC)
\label{eqjn}
\end{eqnarray}
which is injective. That is, the space of $n$-variate Wilsonian generalized functions is larger than $\{0\}$, and contains the $n$-variate distributions.
\label{thmjInj}
\end{theorem}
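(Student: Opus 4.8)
The plan is to handle the three assertions in turn, the injectivity of $j$ being the only substantial one. That $W_{n}$ is an $\RR$-vector space is immediate: the set of all maps $\CC\to\EEn$ is a vector space under pointwise operations, so it is enough to see that $W_{n}$ is a linear subspace, and the defining constraint $w(\kappa)=\otimes^{n}C_{\mu}\,w(\lambda)$ is linear in $w$ for each fixed triple $(\kappa,\lambda,\mu)$ because $\otimes^{n}C_{\mu}$ is a linear operator on $\EEn$; hence any $\RR$-linear combination of elements of $W_{n}$ again satisfies every constraint.

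Next I would check that $j$ actually lands in $W_{n}$ and is linear. By the mapping properties of properly supported smoothing operators recalled above, each $C_{\kappa}$ acts continuously $\DDx{}'\to\EE$, so $\otimes^{n}C_{\kappa}$ maps $\DDxn{}'$ into $\EEn$ and $\widehat{\omega}(\kappa):=\otimes^{n}C_{\kappa}\,\omega$ is well defined. To see $\widehat{\omega}\in W_{n}$, take $\kappa\preceq\lambda$ with $C_{\kappa}=C_{\mu}C_{\lambda}$; using the functoriality $\otimes^{n}(C_{\mu}C_{\lambda})=(\otimes^{n}C_{\mu})(\otimes^{n}C_{\lambda})$ one gets $\widehat{\omega}(\kappa)=\otimes^{n}C_{\kappa}\,\omega=\otimes^{n}C_{\mu}\,(\otimes^{n}C_{\lambda}\,\omega)=\otimes^{n}C_{\mu}\,\widehat{\omega}(\lambda)$, which is exactly the compatibility condition. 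Linearity of $\omega\mapsto\widehat{\omega}$ is clear since each $\otimes^{n}C_{\kappa}$ is linear.

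The crux is injectivity. Suppose $\widehat{\omega}=0$, i.e.\ $\otimes^{n}C_{\kappa}\,\omega=0$ in $\EEn$ for every $\kappa\in\CC$; I want $\omega=0$ in $\DDxn{}'$. I would test against an arbitrary $\phi\in\DDxn$ and transfer the operators onto $\omega$ through the formal transpose: since $C_{\kappa^{t}}$ is the transpose of $C_{\kappa}$ for the $\EE$--$\DDx$ pairing, one has $0=\langle\otimes^{n}C_{\kappa}\,\omega,\phi\rangle=\langle\omega,\otimes^{n}C_{\kappa^{t}}\,\phi\rangle$ for all $\kappa$ and all $\phi$. It is essential here that a single coarse-graining does not suffice: on affine $\M$, $C_{\kappa}$ is convolution by a real test function whose Fourier transform has real zeros, so $C_{\kappa}$ is \emph{not} injective on non-compactly-supported distributions (schematically $C_{\kappa}\,\e^{\I\xi_{0}x}=\widehat{\kappa}(\xi_{0})\,\e^{\I\xi_{0}x}$, which vanishes whenever $\widehat{\kappa}(\xi_{0})=0$). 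One must therefore exploit the whole family. The decisive input is that $\CC$ is directed under $\preceq$ and contains a net (on affine $\M$, the standard narrowing mollifiers) for which $\otimes^{n}C_{\kappa^{t}}\,\phi\to\phi$ in $\DDxn$; granting this, continuity of $\omega$ gives $\langle\omega,\phi\rangle=\lim\langle\omega,\otimes^{n}C_{\kappa^{t}}\,\phi\rangle=0$, and since $\phi\in\DDxn$ is arbitrary, $\omega=0$.

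The main obstacle is precisely this approximate-identity step: establishing that the transposed coarse-grainings approach the identity, equivalently that $\bigcup_{\kappa\in\CC}\Ran(\otimes^{n}C_{\kappa^{t}})$ is dense in $\DDxn$. On affine spacetime this is the classical convergence of mollifications in the test-section topology; on a general manifold it requires constructing, via a partition of unity and local mollifiers, injective properly supported smoothing kernels whose transposes converge strongly to the identity on $\DDxn$, a construction belonging to the setup of coarse-grainings in \cite{Laszlo2022}. The injectivity of individual coarse-grainings recorded above is genuinely insufficient, as the plane-wave example shows; it is the cofinal approximation of the identity within the directed family that forces $j$ to be injective, whence $W_{n}\neq\{0\}$ and $W_{n}$ contains the $n$-variate distributions.
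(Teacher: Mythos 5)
Your proof is correct and follows essentially the same route as the paper's: the paper likewise establishes injectivity by taking a Dirac-delta approximating sequence of mollifying kernels $\kappa_{i}\in\CC$ and observing that $\otimes^{n}C_{\kappa_{i}}\,\omega\to\omega$ in the weak-* topology, which is precisely your transpose argument $\langle\omega,\otimes^{n}C_{\kappa_{i}^{t}}\,\phi\rangle\to\langle\omega,\phi\rangle$ read dually, with the same (implicit) reliance on the existence of such approximating families of coarse-grainings. One caveat that does not affect the validity of the argument: your parenthetical claim that the Fourier transform of every real test function has real zeros is an overstatement (there exist real test functions with strictly positive Fourier transform on $\RR^{N}$), although your conclusion that a single coarse-graining need not be injective on all of $\DDx{}'$ remains correct, since the entire extension of $\widehat{\kappa}$ always has complex zeros and hence kills suitable growing exponentials.
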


\begin{proof}
Only the injectivity of $j$ may not be immediately evident. That is seen by taking any $\omega\in\DDxn{}'$ and 
a sequence $\kappa_{i}$ ($i\in\N_{0}$) of mollifying kernels which are Dirac delta 
approximating. Then, the sequence of distributions 
$\otimes^{n}C_{\kappa_{i}}\,\omega$ ($i\in\N_{0}$) is convergent to $\omega$ in 
the weak-* topology. If $\omega$ were such that 
$\forall\,\kappa\in\mathscr{C}:\;\otimes^{n}C_{\kappa}\,\omega=0$ holds, then for an above kind of sequence 
$\forall\,i\in\N_{0}:\;\otimes^{n}C_{\kappa_{i}}\,\omega=0$ holds. Therefore, its 
weak-* limit, being equal to $\omega$, is zero. That is, $\omega=0$.
\end{proof}

The aim of the paper is to see if $W_{n}$ is strictly larger than $j[\DDxn{}']$ or not.

\begin{remark}
$W_{n}$ can naturally be topologized as follows. Recall that the space of 
coarse-grainings $(\CC,\preceq)$ was a partially ordered set, and that 
by construction, for all $C_{\kappa},C_{\lambda}\in\CC$ and $C_{\kappa}\preceq C_{\lambda}$ 
there existed a unique continuous linear map $F_{\lambda,\kappa}:\,\EE\rightarrow\EE$ such that 
$C_{\kappa}=F_{\lambda,\kappa}\,C_{\lambda}$ holds. In addition, for 
all $C_{\kappa},C_{\lambda},C_{\mu}\in\CC$ and $C_{\kappa}\preceq C_{\lambda}\preceq C_{\mu}$ 
the corresponding maps 
satisfy $\otimes^{n}F_{\mu,\kappa}=\otimes^{n}F_{\lambda,\kappa}\circ \otimes^{n}F_{\mu,\lambda}$. 
Therefore, the pair 
$\left(\; \left( \EEn \right)_{\kappa\in\CC} \;, \; \left( {\otimes}^{n}F_{\lambda,\kappa} \right)_{\kappa,\lambda\in\CC\;\mathrm{and}\;\kappa\preceq\lambda} \;\right)$
forms a projective system (see also e.g.\ \cite{DeJong2021}~Ch4.21). 
It is seen that $W_{n}$ is the projective limit of the above projective 
system.\footnote{Note that some pieces of literature 
require the partially ordered index set to be forward directed, but this is 
not necessary for the projective limit to be meaningful, see also \cite{DeJong2021}~Ch4.21.} 
The canonical projections are $\big( \Pi_{\kappa} \big)_{\kappa\in\CC}$ with 
$\Pi_{\kappa}:W_{n}\rightarrow\EEn,w\mapsto w(\kappa)$ (for all $\kappa\in\CC$). 
$W_{n}$ can be endowed with the natural projective limit vector topology, 
being the Tychonoff topology, i.e.\ the weakest topology such that the 
canonical projection maps are continuous.
\end{remark}

The following general result can be stated on the topology of $W_{n}$.

\begin{theorem}
The projective limit vector topology on $W_{n}$ exists, and has the properties:
\begin{enumerate}[(i)]
 \item \label{thmProjLimi} It is Hausdorff, locally convex, nuclear, complete.
 \item \label{thmProjLimii} It is semi-Montel, and thus semi-reflexive.
 \item \label{thmProjLimiii} It has the Schwartz property.
\end{enumerate}
\label{thmProjLim}
\end{theorem}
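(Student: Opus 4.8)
The plan is to realize $W_{n}$ as a \emph{closed} linear subspace of the full topological product $\prod_{\kappa\in\CC}\EEn$ and then to transport each listed property from the factors to the subspace via the standard permanence theorems for locally convex spaces. First I would observe that the projective limit topology exists automatically: by construction $W_{n}$ is the set of families $(w(\kappa))_{\kappa\in\CC}$ in $\prod_{\kappa\in\CC}\EEn$ obeying the coherence relations $w(\kappa)=\otimes^{n}F_{\lambda,\kappa}\,w(\lambda)$ for all $\kappa\preceq\lambda$, and the weakest topology rendering every projection $\Pi_{\kappa}$ continuous is precisely the subspace topology induced from the Tychonoff topology of the product. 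Hence the projective limit topology is well defined and locally convex, since products and subspaces of locally convex spaces are locally convex.

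Next I would verify closedness. For each comparable pair $\kappa\preceq\lambda$ the two maps $\Pi_{\kappa}$ and $\otimes^{n}F_{\lambda,\kappa}\circ\Pi_{\lambda}$ are continuous with values in the Hausdorff space $\EEn$, so the set on which they agree is the preimage of the closed diagonal of $\EEn\times\EEn$ and is therefore closed. As $W_{n}$ is the intersection of these closed sets over all comparable pairs, it is closed in the product. Now each factor $\EEn$ is a nuclear Fr\'echet space and hence simultaneously Hausdorff, complete, nuclear, Montel (in particular semi-Montel), and Schwartz, every nuclear space being a Schwartz space. Each of these properties is stable under the formation of arbitrary topological products and under passage to closed linear subspaces, permanence facts collected in \ref{secApp}. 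Feeding these in, the product inherits all the properties, and $W_{n}$, being closed in it, inherits them in turn; this settles (\ref{thmProjLimi}) and the semi-Montel half of (\ref{thmProjLimii}). Semi-reflexivity is then automatic, since in a semi-Montel space every bounded set is relatively compact, hence relatively weakly compact. The Schwartz property (\ref{thmProjLimiii}) follows either from the already-established nuclearity of $W_{n}$ or directly from the product/subspace stability of Schwartz spaces.

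The computations are routine, so the only steps deserving genuine care are the closedness of $W_{n}$ in the product and the semi-Montel permanence under a \emph{possibly uncountable} product: here one must invoke Tychonoff's theorem, noting that a bounded subset of the product is contained in the product of its coordinate-wise closures, each compact because the corresponding factor is semi-Montel, so that this product is compact and the closure of the bounded set, as a closed subset of it, is compact as well. I would stress that because every property in the list passes to closed subspaces, the cardinality of the index set $\CC$ causes no difficulty, the permanence theorems holding for products of arbitrary size. The one point I would double-check against the literature is that only the \emph{semi}-Montel (not the full Montel) property is asserted, since closed subspaces of Montel spaces need not remain barrelled; the theorem wisely claims only semi-Montel and semi-reflexivity, which are exactly the properties genuinely stable under the operations used here.
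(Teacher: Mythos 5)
Your proof is correct, and in substance it follows the same route as the paper: both deduce the theorem from permanence properties of the projective limit. The difference is one of packaging. The paper invokes the permanence theorems as black boxes (the remark after Tr\`eves Proposition~50.1 for existence, Hausdorffness, local convexity and nuclearity; Schaefer Ch.~II~5.3 for completeness; Horv\'ath Ch.~3.9 and Ch.~3.15 for the semi-Montel and Schwartz properties), whereas you re-prove them by realizing $W_{n}$ as a \emph{closed} linear subspace of the product $\prod_{\kappa\in\CC}\EEn$ --- closedness obtained from preimages of the diagonal under the continuous pair maps $w\mapsto\bigl(\Pi_{\kappa}w,\;\otimes^{n}F_{\lambda,\kappa}\,\Pi_{\lambda}w\bigr)$ --- and then transporting each property through arbitrary products and closed subspaces, with Tychonoff's theorem handling the possibly uncountable index set in the semi-Montel step. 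Since this embedding argument is exactly how the cited permanence theorems are themselves proved, nothing is lost; what your version buys is self-containedness, and it makes explicit why only the \emph{semi}-Montel property (not Montel, whose barrelledness component does not pass to closed subspaces) can be asserted, a point the paper leaves implicit in its choice of citations. Two small repairs are needed: the permanence facts you attribute to \ref{secApp} are not collected there (that appendix contains only the distribution-theoretic lemmas used in Section~\ref{secBF}), so those appeals should be directed to the literature the paper cites (Tr\`eves, Schaefer, Horv\'ath); and your alternative derivation of (\ref{thmProjLimiii}) from nuclearity rests on the standard but nontrivial fact that nuclear spaces are Schwartz spaces, so either cite that or stay with the product/closed-subspace permanence of the Schwartz property, which suffices.
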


\begin{proof} We deduce these from the permanence properties of the projective limit.

(\ref{thmProjLimi}) First of all, the projective limit topology on a projective system of 
topological vector spaces 
exists and is a vector topology, see remark (i) after \cite{Treves1970}~Proposition50.1.
Moreover, all the spaces in $\big( \EEn \big)_{\kappa\in\CC}$ are Hausdorff and for all 
$w\in W_{n}{\setminus}\{0\}$ there is at least one $\kappa\in\CC$ such that 
$\Pi_{\kappa}w\neq 0$, by definition. Therefore, by means of the same remark, 
the pertinent topology is Hausdorff. 
All the spaces in the projective system are locally convex, therefore by 
means of the same remark, the projective limit topology is also locally convex. 
By means of \cite{Treves1970}~Proposition50.1~(50.7), the Hausdorff projective limit 
respects nuclearity, therefore $W_{n}$ is nuclear. Completeness is 
also a simple consequence of the completeness of each space in the system 
$\big( \EEn \big)_{\kappa\in\CC}$, see \cite{Schaefer1999}~ChII~5.3.

(\ref{thmProjLimii}) The semi-Montel property is a consequence of the Montel (and thus, semi-Montel) property of each space in the 
system $\big( \EEn \big)_{\kappa\in\CC}$ and of \cite{Horvath1966}~Ch3.9~Proposition6 and \cite{Horvath1966}~Ch3.9~Exercise3. 
It is semi-reflexive since it is semi-Montel \cite{Horvath1966}~Ch3.9~Proposition1. 
(See also \cite{Horvath1966}~p.442~Table3.)

(\ref{thmProjLimiii}) Schwartz property follows from \cite{Horvath1966}~Ch3.15~Proposition6(c).
\end{proof}

The above theorem proves statement~(A) in Section~\ref{secIntro}. 
As seen, the topological vector space $W_{n}$ has rather similar properties 
to the space of ordinary distributions $\DDxn{}'$. 
One may conjecture that $j[\DDxn{}']\subset W_{n}$ saturates $W_{n}$. 
For the generic case, we were unable to construct a proof for this claim. 
However, for the special case of bosonic fields over affine spaces (flat spacetime), 
this surjectivity property is proved in the following section.

\section{The symmetrized case over affine space}
\label{secBF}

In this section, denote by $\M$ a finite dimensional real affine space, with 
subordinate vector space (``tangent space'') $T$.\footnote{Without 
loss of generality, one may even take $\M:=T:=\RR^{N}$ for some $N\in\N_{0}$.} 
In such scenario, due to the existence of an affine-constant nonvanishing maximal 
form field (corresponding to the Lebesgue measure), one does not need to distinguish 
$V^{\times}(\M)$ from $V^{*}(\M)$, since one may use the identification 
$\wedge^{\dim(\M)}T^{*}\equiv\RR$, up to a real multiplier. The smooth 
sections of a trivialized vector bundle $V(\M)$ can be identified with $\M\rightarrow V$ smooth functions, 
$V$ being the typical fiber. For simplicity of notation, in this section only 
scalar valued fields, i.e.\ $V=\RR$ are considered. The generic vector valued 
case can be recovered straightforwardly, \emph{mutatis mutandis}.

Due to affine base manifold and trivialized bundles over it, the notion of 
convolution operators by real valued test functions is meaningful. Given 
$f\in\DD$, the convolution operator acts as $C_{f}:\,\DD\rightarrow\DD$ with $C_{f}\,g:=f\star g$ ($\forall\,g\in\DD$) 
using the traditional star notation. Such a convolution operator $C_{f}$ is 
a coarse-grainig operator in terms of Section~\ref{secWilson}, with 
affine-translationally invariant mollifying kernel. All the previously mentioned 
properties hold for it, and in addition, it is commutative, i.e.\ 
$C_{g}\,C_{f}=C_{f}\,C_{g}$ ($\forall\,f,g\in\DD$). In some of the proofs this 
special property will be relied on. Clearly, the relation $\preceq$ can be 
restricted onto the space $\DDz$, and the definition of $W_{n}$ may be reformulated 
in case of affine spaces using the partially ordered set $(\DDz,\preceq)$ in 
Definition~\ref{defWn} instead of generic coarse-graining operators.

In this section, only bosonic fields are considered. Therefore, the notation 
$\EEvn$ and $\DDvn$ are introduced for the totally symmetrized subspace of 
$\EEn$ and $\DDn$, respectively, with their corresponding totally 
symmetrized distributions $\EEvn{}'$ and $\DDvn{}'$. The topological vector 
space of $n$-variate totally symmetric Wilsonian renormalization group flows 
$\Wvn$ can be also introduced based on Definition~\ref{defWn}, stated below.

\begin{definition}
Let $n\in\N_{0}$. Then, the set of maps
\begin{eqnarray}
 \Wvn \;:=\; 
 \big\{\, {w{:}\,{\DDz}{\rightarrow}\EEvn} \;\big\vert\; \forall f,g\in\DDz,\,f\preceq g \;\mathrm{ (with }\; f=C_{h}g\mathrm{)}:\,\cr
  \qquad\qquad\qquad\qquad\qquad\qquad\qquad\qquad\qquad\qquad w(f)={\otimes}^{n}C_{h}\,w(g) \;\big\}
\end{eqnarray}
is called the space of \defin{$n$-variate symmetric Wilsonian generalized functions}. 
\end{definition}

Clearly, the analogy of Theorem~\ref{thmProjLim} applies to $\Wvn$. 
Also, the natural continuous linear injection 
$j:\,\DDvn{}'\rightarrow\Wvn$ can be defined, in the analogy of Theorem~\ref{thmjInj}. 
The aim of this section is to prove that this canonical injection map $j$ is surjective. 
For this purpose, one needs to invoke a number of tools, as follows. 
First, recall the polarization identity for totally symmetric $n$-forms.

\begin{lemma}[polarization identity for $n$-forms, see also \cite{Thomas2014}~formula~A.1]
Let $V$ and $W$ be real or complex vector spaces and 
$u:\,V\rightarrow W$ be an $n$-order homogeneous polynomial. 
Then, the map
\begin{eqnarray}
 u^{\vee}:\quad {\times}^{n}V\longrightarrow W,\quad (x_{1},\dots,x_{n})\longmapsto u^{\vee}(x_{1},\dots,x_{n}) \;:=\cr
 \qquad\qquad \frac{1}{n!}\sum_{\epsilon_{1}=0,\dots,\epsilon_{n}=0}^{1} (-1)^{n-(\epsilon_{1}+\dots+\epsilon_{n})}\quad u(\epsilon_{1}x_{1}+\dots+\epsilon_{n}x_{n})
\end{eqnarray}
is an $n$-linear symmetric map, moreover $\forall x\in V:\;u^{\vee}(x,\dots,x)=u(x)$ holds.
\label{lemPolar}
\end{lemma}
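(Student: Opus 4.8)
The plan is to reduce the claim to the elementary fact that, over a field of characteristic zero, an $n$-order homogeneous polynomial is precisely the restriction to the diagonal of a symmetric $n$-linear map, and then to check that the stated alternating sum reconstructs that multilinear map. By the definition of an $n$-order homogeneous polynomial I may write $u(x)=A(x,\dots,x)$ for some $n$-linear map $A\colon{\times}^{n}V\to W$; replacing $A$ by its symmetrization $\frac{1}{n!}\sum_{\sigma\in S_{n}}A(x_{\sigma(1)},\dots,x_{\sigma(n)})$ if necessary, I may assume $A$ symmetric without altering $u$ on the diagonal. All three assertions (symmetry, $n$-linearity, and $u^{\vee}(x,\dots,x)=u(x)$) will then follow at once from the single identity $u^{\vee}=A$.

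First I would substitute this representation into the definition of $u^{\vee}$ and expand by multilinearity,
\[
 u(\epsilon_{1}x_{1}+\dots+\epsilon_{n}x_{n})=\sum_{j_{1},\dots,j_{n}=1}^{n}\epsilon_{j_{1}}\cdots\epsilon_{j_{n}}\,A(x_{j_{1}},\dots,x_{j_{n}}).
\]
Inserting this and interchanging the two finite summations yields
\[
 u^{\vee}(x_{1},\dots,x_{n})=\frac{1}{n!}\sum_{j_{1},\dots,j_{n}=1}^{n}c(j_{1},\dots,j_{n})\,A(x_{j_{1}},\dots,x_{j_{n}}),
\]
where $c(j_{1},\dots,j_{n}):=\sum_{\epsilon\in\{0,1\}^{n}}(-1)^{\,n-(\epsilon_{1}+\dots+\epsilon_{n})}\,\epsilon_{j_{1}}\cdots\epsilon_{j_{n}}$ is a purely combinatorial coefficient carrying all the content of the identity.

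The heart of the argument, and the one step that needs genuine care rather than bookkeeping, is the evaluation of $c$. Since each $\epsilon_{k}\in\{0,1\}$ is idempotent, the monomial $\epsilon_{j_{1}}\cdots\epsilon_{j_{n}}$ equals $\prod_{k\in S}\epsilon_{k}$, where $S=\{j_{1},\dots,j_{n}\}$ is the set of indices that actually occur. Writing $(-1)^{\,n-(\epsilon_{1}+\dots+\epsilon_{n})}=(-1)^{n}\prod_{k=1}^{n}(-1)^{\epsilon_{k}}$ and factorizing the sum over $\epsilon$ coordinatewise, each coordinate $k\notin S$ contributes a factor $\sum_{\epsilon_{k}\in\{0,1\}}(-1)^{\epsilon_{k}}=0$, while each $k\in S$ contributes $\sum_{\epsilon_{k}\in\{0,1\}}(-1)^{\epsilon_{k}}\epsilon_{k}=-1$. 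Hence $c$ vanishes unless $S=\{1,\dots,n\}$, i.e.\ unless $(j_{1},\dots,j_{n})$ is a permutation of $(1,\dots,n)$, in which case $c=(-1)^{n}\cdot(-1)^{n}=1$. This is the familiar inclusion--exclusion (finite-difference) cancellation, and it is exactly where the signs $(-1)^{\,n-(\epsilon_{1}+\dots+\epsilon_{n})}$ and the normalization $1/n!$ earn their keep.

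Finally I would assemble the pieces: only permutation multi-indices survive, so $u^{\vee}(x_{1},\dots,x_{n})=\frac{1}{n!}\sum_{\sigma\in S_{n}}A(x_{\sigma(1)},\dots,x_{\sigma(n)})=A(x_{1},\dots,x_{n})$ by the assumed symmetry of $A$. Thus $u^{\vee}=A$ is symmetric and $n$-linear, and $u^{\vee}(x,\dots,x)=A(x,\dots,x)=u(x)$, as claimed. The only remaining points to attend to are the admissibility of a symmetric representation $u=A(\cdot,\dots,\cdot)$, which is automatic over $\RR$ or $\mathbb{C}$ because $1/n!$ is available there, and the degenerate low orders $n=0,1$, where the formula collapses to $u^{\vee}=u$ and one uses $u(0)=0$ for $n\ge 1$.
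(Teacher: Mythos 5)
Your proof is correct and complete. One remark on context: the paper does not actually prove this lemma at all --- it is imported from the literature (the citation to formula~A.1 of the reference by Thomas) --- so there is no internal proof to compare yours against; your argument fills that gap with the standard finite-difference / inclusion--exclusion computation. The key steps all check out: representing $u$ as the diagonal of a symmetric $n$-linear map $A$ (symmetrization is legitimate since $1/n!$ exists over $\RR$ and $\mathbb{C}$), expanding by multilinearity, and evaluating the coefficient $c(j_{1},\dots,j_{n})$ by factoring the sum over $\epsilon\in\{0,1\}^{n}$ coordinatewise --- coordinates absent from $\{j_{1},\dots,j_{n}\}$ contribute $\sum_{\epsilon_{k}}(-1)^{\epsilon_{k}}=0$, present ones contribute $-1$ --- so that only the $n!$ permutation multi-indices survive, each with coefficient $(-1)^{n}(-1)^{n}=1$, whence $u^{\vee}=A$; this single identity indeed delivers $n$-linearity, symmetry, and the diagonal property simultaneously. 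The only point you should make explicit is definitional: your argument presupposes that ``$n$-order homogeneous polynomial'' means ``diagonal restriction of some $n$-linear map''. That is the standard meaning in this setting, and the one consistent with the cited source, but since the lemma as stated does not spell the definition out, you should declare it as the definition you work from; under a different common characterization (e.g.\ homogeneity of degree $n$ together with polynomiality on finite-dimensional subspaces) an additional equivalence step would be required before your first line is available.
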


The polarization identity motivates the definition of the symmetrized convolution. 
For fixed $f_{1},\dots,f_{n}\in\DD$, set
\begin{eqnarray}
 C^{\vee}_{f_{1},\dots,f_{n}} \;:=\; \frac{1}{n!} \sum_{\epsilon_{1}=0,\dots,\epsilon_{n}=0}^{1} (-1)^{n-(\epsilon_{1}+\dots+\epsilon_{n})}\quad {\otimes}^{n}C_{\epsilon_{1}f_{1}+\dots+\epsilon_{n}f_{n}}
\end{eqnarray}
which is then a linear operator between the function spaces of the domain and range of 
$C_{f_{1},\dots,f_{n}}:=C_{f_{1}}\otimes\dots\otimes C_{f_{n}}=C_{f_{1}\otimes\dots\otimes f_{n}}$, 
with the same properties. 
Moreover, $C^{\vee}_{f_{1},\dots,f_{n}}$ is $n$-linear and symmetric in its parameters 
$f_{1},\dots,f_{n}\in\DD$ and one has the identity $C^{\vee}_{f,\dots,f}=C_{f,\dots,f}$. 
Quite naturally, one has the identity 
$C^{\vee}_{f_{1},\dots,f_{n}}=\frac{1}{n!}\sum_{\pi\in\Pi_{n}}C_{f_{\pi(1)},\dots,f_{\pi(n)}}$ as well, 
with $\Pi_{n}$ denoting the set of permutations of the index set $\{1,\dots,n\}$. 
Furthermore, $C^{\vee}_{f_{1},\dots,f_{n}}=C_{\Sym(f_{1}\otimes\dots\otimes f_{n})}$ holds, 
where $\Sym(f_{1}\otimes\dots\otimes f_{n}):=\frac{1}{n!}\sum_{\pi\in\Pi_{n}}f_{\pi(1)}\otimes\dots\otimes f_{\pi(n)}\in\DDvn\subset\DDn$.

\begin{definition}
Take the canonical projection operators $\big(\Pi_{f}\big)_{f\in\DDz}$ 
from the projective system defining $\Wvn$. These act as 
$\Pi_{f}w:=w(f)$ on each $w\in \Wvn$ ($\forall f\in\DDz$) and 
extend this notation, for convenience, by $\Pi_{f}w:=0$ whenever $f=0$. 
Then, for all $f_{1},\dots,f_{n}\in\DD$, the following map is defined:
\begin{eqnarray}
 \Piv_{f_{1},\dots,f_{n}}:\; \Wvn\longrightarrow\EEvn,\; w\longmapsto \Piv_{f_{1},\dots,f_{n}}w \;:= \cr
 \qquad\qquad\qquad\qquad \frac{1}{n!}\sum_{\epsilon_{1}=0,\dots,\epsilon_{n}=0}^{1} (-1)^{n-(\epsilon_{1}+\dots+\epsilon_{n})}\quad \Pi_{{}_{\epsilon_{1}f_{1}+\dots+\epsilon_{n}f_{n}}}w
\end{eqnarray}
which may be called the \defin{polarized version of the canonical projection}.
\end{definition}

By construction, for all $\omega\in\DDvn{}'$, one has that 
$\forall\,f_{1},\dots,f_{n}\in\DD:\;\Piv_{f_{1},\dots,f_{n}}\widehat{\omega}=C^{\vee}_{f_{1},\dots,f_{n}}\omega$ holds, 
which is the rationale behind the above definition.
In addition, for all $f_{1},\dots,f_{n}\in\DD$ and $\omega\in\DDn'$ one has the identity 
$\big(\Piv_{f_{1}^{t},\dots,f_{n}^{t}}\,\widehat{\omega}\big)(0)=\big(C^{\vee}_{f_{1}^{t},\dots,f_{n}^{t}}\,\omega\big)(0)=(\Sym(\omega)\,\vert\,f_{1}\otimes\dots\otimes f_{n})$, 
where $\Sym(\omega)$ is the totally symmetrized part of $\omega$, and 
$f^{t}$ is the reflected version of $f$. 
This motivates the construction of the tentative inverse map of $j$, below.

\begin{definition}
Denote by $\Map(A,B)$ the set of $A\rightarrow B$ maps between sets $A,\,B$.
Using this notation, invoke the linear map
\begin{eqnarray}
 \ell:\quad \Wvn\longrightarrow \Map({\times}^{n}\DD,\EEvn),\quad w\longmapsto\mathring{w}, \cr
 \Bigg. \mathrm{with }\; \mathring{w}(f_{1},\dots,f_{n}):=\Piv_{f_{1}^{t},\dots,f_{n}^{t}}w \quad (\mathrm{for\; any }\; \,f_{1},\dots,f_{n}\in\DD).
\end{eqnarray}
Using that, invoke the linear map
\begin{eqnarray}
 k:\quad \Wvn\longrightarrow \Map({\times}^{n}\DD,\RR),\quad w\longmapsto\widetilde{w}, \cr
 \Bigg. \mathrm{with }\; \widetilde{w}(f_{1},\dots,f_{n}):=\big(\mathring{w}(f_{1},\dots,f_{n})\big)(0) \quad (f_{1},\dots,f_{n}\in\DD).
\end{eqnarray}
This map $k$ will be the \defin{tentative inverse} of the continuous linear injection $j$.
\end{definition}

First, we show that for all $w\in\Wvn$, the map $\widetilde{w}:\,{\times}^{n}\DD\rightarrow\RR$ is $n$-linear in its arguments.

\begin{lemma}
For all $w\in\Wvn$, the map $\mathring{w}:\,{\times}^{n}\DD\rightarrow\EEvn$ is linear in each variable and is totally symmetric. 
The map $\widetilde{w}:\,{\times}^{n}\DD\rightarrow\RR$ is also linear in each variable and totally symmetric.
\end{lemma}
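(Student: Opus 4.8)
The plan is to prove the stronger assertion that $\mathring{w}$, given by $\mathring{w}(f_{1},\dots,f_{n})=\Piv_{f_{1}^{t},\dots,f_{n}^{t}}w$, is $n$-linear and totally symmetric; the statement for $\widetilde{w}$ then follows at once, since $\widetilde{w}(f_{1},\dots,f_{n})=\big(\mathring{w}(f_{1},\dots,f_{n})\big)(0)$ is obtained from $\mathring{w}$ by the (linear) evaluation at $0\in\M^{n}$. The a priori difficulty is that the bare polarization formula makes $\Piv_{\bullet}w$ only manifestly symmetric, and there is no reason for the map $g\mapsto\Pi_{g}w=w(g)$ to be a homogeneous polynomial of degree $n$ before the defining constraint of $\Wvn$ is invoked. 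The device that resolves this is to compose with a single fixed coarse-graining and exploit commutativity of convolution.

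First I would record the key identity: for any fixed nonzero $\rho\in\DDz$ and any $g\in\DDz$, commutativity $\rho\star g=g\star\rho$ together with the defining relation of $\Wvn$ applied along the two factorizations $\rho\star g=C_{\rho}\,g$ and $g\star\rho=C_{g}\,\rho$ gives $\otimes^{n}C_{\rho}\,w(g)=w(\rho\star g)=w(g\star\rho)=\otimes^{n}C_{g}\,w(\rho)$, the degenerate terms being handled by the conventions $w(0)=0$ and $C_{0}=0$. Thus applying the fixed operator $\otimes^{n}C_{\rho}$ converts a value of $w$ at the variable argument $g$ into the action of the variable operator $\otimes^{n}C_{g}$ on the fixed section $w(\rho)$.

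Next I would apply $\otimes^{n}C_{\rho}$ to $\mathring{w}(f_{1},\dots,f_{n})$, expand the polarization sum, and substitute the identity above into each summand. Using that the kernel assignment $g\mapsto C_{g}$ is linear, so that $C_{\epsilon_{1}f_{1}^{t}+\dots+\epsilon_{n}f_{n}^{t}}=\sum_{i}\epsilon_{i}C_{f_{i}^{t}}$, together with the multilinearity of $\otimes^{n}$, one is left with a sum over maps $\phi\colon\{1,\dots,n\}\to\{1,\dots,n\}$ weighted by the scalar $\sum_{\epsilon\in\{0,1\}^{n}}(-1)^{n-(\epsilon_{1}+\dots+\epsilon_{n})}\prod_{k}\epsilon_{\phi(k)}$. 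A short computation (factoring $(-1)^{n-|\epsilon|}=\prod_{i}(-1)^{1-\epsilon_{i}}$ and summing each $\epsilon_{i}$ separately) shows that this scalar equals $1$ when $\phi$ is a bijection and $0$ otherwise, so that only the permutations survive:
\[
\otimes^{n}C_{\rho}\,\mathring{w}(f_{1},\dots,f_{n})=\frac{1}{n!}\sum_{\pi\in\Pi_{n}}\big(C_{f_{\pi(1)}^{t}}\otimes\dots\otimes C_{f_{\pi(n)}^{t}}\big)\,w(\rho)=C^{\vee}_{f_{1}^{t},\dots,f_{n}^{t}}\,w(\rho).
\]
The right-hand side is manifestly $n$-linear and totally symmetric in $(f_{1},\dots,f_{n})$, since $C^{\vee}$ was already shown to be $n$-linear and symmetric in its parameters and $f\mapsto f^{t}$ is linear.

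Finally, because $\rho$ is a coarse-graining, $\otimes^{n}C_{\rho}\colon\EEvn\to\EEvn$ is injective; transporting the $n$-linearity and symmetry of the right-hand side back through this injection yields the same properties for $\mathring{w}$, and then for $\widetilde{w}$ by evaluating at $0$. The main obstacle is exactly this passage: recognizing that smoothing by the single fixed $\rho$ and commuting it onto the variable arguments collapses the merely-symmetric polarization into the manifestly multilinear symmetric object $C^{\vee}_{f_{1}^{t},\dots,f_{n}^{t}}\,w(\rho)$. The delicate bookkeeping is the combinatorial cancellation of all non-bijective $\phi$ and the careful treatment of the degenerate ($g=0$) summands, while injectivity of coarse-grainings is what licenses removing $\otimes^{n}C_{\rho}$ at the end.
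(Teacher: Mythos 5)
Your computation up to the identity $\otimes^{n}C_{\rho}\,\mathring{w}(f_{1},\dots,f_{n})=C^{\vee}_{f_{1}^{t},\dots,f_{n}^{t}}\,w(\rho)$ is correct, and it is essentially the paper's own mechanism: the paper runs the same chain (smoothing, defining relation of $\Wvn$, commutativity of convolution) with the smoothing parameter written as $g$, and simply quotes the permutation form of $C^{\vee}$ where you re-derive it via the cancellation over non-bijective maps $\phi$. The genuine gap is your last step. You remove the smoothing by asserting that, $\rho$ being a coarse-graining, $\otimes^{n}C_{\rho}:\,\EEvn\rightarrow\EEvn$ is injective. That is false. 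The definition of a coarse-graining requires injectivity of $C_{\rho}:\,\EEx{}'\rightarrow\DD$, i.e.\ on \emph{compactly supported} distributions; it says nothing about injectivity on $\EE$, and for a convolution operator by a nonzero test function injectivity on $\EE$ always fails. Indeed, the Fourier--Laplace transform $\widehat{\rho}$ is a nonconstant entire function of exponential type (restrict it to a complex line if $\dim\M>1$); it must have a complex zero $\zeta_{0}$, since a zero-free entire function of exponential type has the form $\mathrm{e}^{a\zeta+b}$, which is incompatible with the Riemann--Lebesgue decay of $\widehat{\rho}$ on the real axis. Then the smooth function $x\mapsto \mathrm{e}^{\I\langle\zeta_{0},x\rangle}$ (equivalently, its real and imaginary parts) is annihilated by $C_{\rho}$. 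So from $\otimes^{n}C_{\rho}D=0$ for your single fixed $\rho$, where $D\in\EEvn$ is a multilinearity defect such as $\mathring{w}(\alpha f_{1},f_{2},\dots,f_{n})-\alpha\,\mathring{w}(f_{1},\dots,f_{n})$, you cannot conclude $D=0$.

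The repair is exactly what the paper does. Your key identity holds for \emph{every} $\rho\in\DDz$, with the right-hand side $C^{\vee}_{f_{1}^{t},\dots,f_{n}^{t}}\,w(\rho)$ multilinear and symmetric in $(f_{1},\dots,f_{n})$ for each $\rho$; hence the defect satisfies $\otimes^{n}C_{\rho}D=0$ for \emph{all} $\rho\in\DDz$ simultaneously. Now \ref{secApp}~Lemma~\ref{lemLagrange} applies: its proof uses a Dirac-delta approximating \emph{family} of smoothings (so that $\otimes^{n}C_{\rho_{i}}D\rightarrow D$ in the weak-* sense), which is precisely the quantifier over $\rho$ that you discarded, and it yields $D=0$. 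With that single substitution --- ``for all $\rho$'' together with Lemma~\ref{lemLagrange} in place of injectivity of one fixed $C_{\rho}$ --- your argument becomes the paper's proof.
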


\begin{proof}
By the definition of $\Wvn$, one has that for all $g,f_{1},\dots,f_{n}\in\DD$ and $\alpha\in\RR$, 
\begin{eqnarray}
 ({\otimes}^{n}C_{g})\,\Piv_{\alpha f_{1},\dots,f_{n}}w = \Piv_{C_{g}\alpha f_{1},\dots,C_{g}f_{n}}w
\end{eqnarray}
which due to the commutativity of convolution further equals to
\begin{eqnarray}
 \Piv_{C_{\alpha f_{1}}g,\dots,C_{f_{n}}g}w = C^{\vee}_{\alpha f_{1},\dots,f_{n}}\,\Piv_{g,\dots,g}w = \alpha\, C^{\vee}_{f_{1},\dots,f_{n}}\Piv_{g}w \cr
 \Big. = \alpha\, \Piv_{C_{f_{1}}g,\dots,C_{f_{n}}g}w
\end{eqnarray}
which again due to the commutativity of convolution further equals to
\begin{eqnarray}
 \alpha\, \Piv_{C_{g}f_{1},\dots,C_{g}f_{n}}w = \alpha\;({\otimes}^{n}C_{g})\,\Piv_{f_{1},\dots,f_{n}}w.
\end{eqnarray}
That is, $\forall g\in\DD:\,{\otimes}^{n}C_{g}\big(\Piv_{\alpha f_{1},\dots,f_{n}}w-\alpha\,\Piv_{f_{1},\dots,f_{n}}w\big)=0$. 
By \ref{secApp}~Lemma~\ref{lemLagrange}, this implies that $\Piv_{\alpha f_{1},\dots,f_{n}}w-\alpha\,\Piv_{f_{1},\dots,f_{n}}w=0$ holds.

One can prove in a completely analogous way that $\Piv_{f_{1}+f_{1}',\dots,f_{n}}w=\Piv_{f_{1},\dots,f_{n}}w+\Piv_{f_{1}',\dots,f_{n}}w$  for all $f_{1},f_{1}',f_{2},\dots,f_{n}\in\DD.$ Hence the map $(f_{1},\dots,f_{n})\mapsto\Piv_{f_{1},\dots,f_{n}}w$ 
is linear in its first, and rather obviously, in each of its variables.

Since the reflection map $f\mapsto f^{t}$ is linear, it also implies that the map 
$\mathring{w}:\,{\times}^{n}\DD\rightarrow\EEvn$ is linear in each of its variables. 
The evaluation map 
$\EEvn\rightarrow\RR,\phi\mapsto\phi(0)$ is linear, therefore it follows that the map 
$\widetilde{w}:\,{\times}^{n}\DD\rightarrow\RR$ is linear in each of its variables.

The total symmetry of $\widetilde{w}$ is by construction evident.
\end{proof}

\begin{remark}
For any $w\in \Wvn$ and corresponding $n$-linear map $\widetilde{w}:\,\times^{n}\DD\rightarrow\RR$, 
its \defin{linear form} 
$\widetilde{\underline{w}}:\,{\otimes}^{n}\DD\rightarrow\RR$ can be defined 
to be the unique linear map for which 
$\widetilde{\underline{w}}(f_{1}\otimes\dots\otimes f_{n})=\widetilde{w}(f_{1},\dots,f_{n})$ holds 
($\forall\,f_{1},\dots,f_{n}\in\DD$). Due to the total symmetry of 
$\widetilde{w}$, the linear map $\widetilde{\underline{w}}$ is totally symmetric.
\end{remark}

Now we show that for any $w\in \Wvn$ the linear map 
$\widetilde{\underline{w}}:\,{\otimes}^{n}\DD\rightarrow\RR$ 
uniquely extends to a distribution.

\begin{lemma}
For all $w\in W$, there exists a unique distribution 
$\overline{\widetilde{w}}:\,\DDvn\rightarrow\RR$, 
such that for all $f_{1},\dots,f_{n}\in\DD$ the identity 
$(\overline{\widetilde{w}}\,\vert\,f_{1}\otimes\dots\otimes f_{n})=\widetilde{w}(f_{1},\dots,f_{n})$ 
holds. That is, $\widetilde{\underline{w}}:\,{\otimes}^{n}\DD\rightarrow\RR$ uniquely extends to the pertinent totally symmetric distribution.
\end{lemma}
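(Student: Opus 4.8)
The plan is to dispose of uniqueness at once and then construct the extension by exhibiting $\widetilde{\underline{w}}$ as a genuine distributional pairing on ``cones'' of smoothed test functions, deferring all the difficulty to a single ultraviolet limit at the end.

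Uniqueness is immediate: the symmetrised simple tensors $\Sym(f_{1}\otimes\dots\otimes f_{n})$ span a sequentially dense subspace of $\DDvn$, so any continuous linear functional on $\DDvn$ is pinned down by its prescribed values $\widetilde{w}(f_{1},\dots,f_{n})$ on them. Hence only existence, that is, continuity of $\widetilde{\underline{w}}$ for the topology inherited from $\DDvn$, has to be established, after which the extension is forced.

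The key intermediate step I would prove is a \emph{cone representation}. Fix a mollifying kernel $g_{0}\in\DDz$ (so that $g_{0}^{t}\in\DDz$ as well) and take $f_{1},\dots,f_{n}\in\DDz$ with $f_{i}\preceq g_{0}$, i.e.\ $f_{i}=\mu_{i}\star g_{0}$ for some $\mu_{i}\in\DD$. Then every argument $\sum_{i}\epsilon_{i}f_{i}^{t}=\big(\sum_{i}\epsilon_{i}\mu_{i}^{t}\big)\star g_{0}^{t}$ occurring in $\Piv_{f_{1}^{t},\dots,f_{n}^{t}}w$ lies below $g_{0}^{t}$, so the Wilsonian cocycle defining $\Wvn$ gives $w\big(\sum_{i}\epsilon_{i}f_{i}^{t}\big)=\big({\otimes}^{n}C_{\sum_{i}\epsilon_{i}\mu_{i}^{t}}\big)\,w(g_{0}^{t})$. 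Summing with the polarisation weights and using $C^{\vee}_{\mu_{1}^{t},\dots,\mu_{n}^{t}}=C_{\Sym(\mu_{1}^{t}\otimes\dots\otimes\mu_{n}^{t})}$ together with the elementary identity $(C_{\phi}u)(0)=(u\,\vert\,\phi^{t})$, one arrives at
\[
 \widetilde{w}(f_{1},\dots,f_{n})=\big(C^{\vee}_{\mu_{1}^{t},\dots,\mu_{n}^{t}}\,w(g_{0}^{t})\big)(0)=\big(w(g_{0}^{t})\,\big\vert\,\Sym(\mu_{1}\otimes\dots\otimes\mu_{n})\big).
\]
Thus, restricted to the cone $\{f\preceq g_{0}\}$, $\widetilde{\underline{w}}$ is represented by the fixed honest distribution $w(g_{0}^{t})\in\EEvn\subset\DDvn{}'$ and is therefore continuous there; equivalently, the sought distribution must obey ${\otimes}^{n}C_{g_{0}^{t}}\,\overline{\widetilde{w}}=w(g_{0}^{t})$ for every $g_{0}$.

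It then remains to upgrade this family of cone-wise continuous representations to a single distribution on all of $\DDvn$. The plan is to realise $\overline{\widetilde{w}}$ as the weak-$*$ limit of the net $\big(w(g^{t})\big)_{g\in\DDz}$ as $g\to\delta$ along the order $\preceq$, which is downward directed since $f\star g\preceq f,g$. The cocycle makes the cone representations mutually consistent on overlaps, so the net is coherent and stabilises on each fixed cone; the limit, once it exists in $\DDvn{}'$, automatically satisfies ${\otimes}^{n}C_{g}\,\overline{\widetilde{w}}=w(g)$ and pairs with $\Sym(f_{1}\otimes\dots\otimes f_{n})$ to give $\widetilde{w}(f_{1},\dots,f_{n})$, inheriting total symmetry from $\widetilde{w}$. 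The hard part will be exactly the existence of this ultraviolet limit \emph{as a distribution}: a general element of $\Wvn$ carries no a priori regularity in its argument, so the decisive local seminorm bound $|\widetilde{w}(f_{1},\dots,f_{n})|\le c_{Q}\,\|f_{1}\|_{C^{m}}\cdots\|f_{n}\|_{C^{m}}$ for $f_{i}$ supported in a fixed compact set must be wrung out \emph{uniformly in} $g_{0}$ purely from the cocycle, assisted by the completeness and semi-Montel structure recorded in Theorem~\ref{thmProjLim}, while independence of the limit from the approximating sequence is secured by the faithfulness of the smoothing family (\ref{secApp}~Lemma~\ref{lemLagrange}), exactly as in the Dirac-sequence argument of Theorem~\ref{thmjInj}. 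Once the estimate is in hand, the Schwartz kernel theorem identifies the resulting separately continuous symmetric $n$-linear form with a unique $\overline{\widetilde{w}}\in\DDvn{}'$, completing the proof.
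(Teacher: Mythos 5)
Your polarization/cone identity is correct and is in substance the same identity the paper derives (its Eqs.~(\ref{eqDistr1})--(\ref{eqDistr4}): with $f_{i}=C_{g_{0}}\mu_{i}$ one has $\widetilde{w}(C_{g_{0}}\mu_{1},\dots,C_{g_{0}}\mu_{n})=\big(w(g_{0}^{t})\,\vert\,\mu_{1}\otimes\dots\otimes\mu_{n}\big)$), and your disposal of uniqueness via density of ${\otimes}^{n}\DD$ is fine. But the proposal does not prove the lemma: you explicitly defer ``the hard part'' --- the existence of the ultraviolet limit of $\big(w(g^{t})\big)_{g}$ \emph{as a distribution}, equivalently the uniform local seminorm bound on $\widetilde{w}$ --- and the tools you gesture at would not deliver it. Completeness and the semi-Montel property of $\Wvn$ (Theorem~\ref{thmProjLim}) give no quantitative control on the functional $\widetilde{w}$; they concern the projective-limit topology, not bounds on point evaluations of polarized projections. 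The paper closes exactly this gap by a different mechanism: from the identity, for a Dirac-delta approximating \emph{sequence} $g_{i}$, the honest distributions $\big(w(g_{i}^{t})\,\vert\,\cdot\big)\in\DDn'$ converge pointwise on the subspace ${\otimes}^{n}\DD$ (since $\big({\otimes}^{n}C_{g_{i}^{t}}\,\Piv_{f_{1}^{t},\dots,f_{n}^{t}}w\big)(0)\rightarrow\big(\Piv_{f_{1}^{t},\dots,f_{n}^{t}}w\big)(0)$ for each fixed smooth $\Piv_{f_{1}^{t},\dots,f_{n}^{t}}w$), and then the Banach--Steinhaus-like Lemma~\ref{lemAlgTensB} of the Appendix upgrades pointwise convergence on ${\otimes}^{n}\DD$ to pointwise convergence on all of $\DDn$ with a distributional limit. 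The proof of that lemma (via Lemma~\ref{lemotimes}: kernel theorem, barrelledness of $\DD_{m}$, equicontinuity from Banach--Steinhaus, dominated convergence) is precisely the uniform-bound argument you say ``must be wrung out''; without it, or a substitute, your proof has a hole at its decisive step.

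A secondary defect: your limiting procedure is organized as a net over $(\DDz,\preceq)$ ``as $g\rightarrow\delta$''. For such a net to make sense, the index set must be directed \emph{toward the UV}, i.e.\ any $f,g$ must admit $h$ with $f\preceq h$ and $g\preceq h$; what you actually note is the opposite (downward) directedness $f\star g\preceq f,g$, which points toward the IR, and upward directedness is not established (and is doubtful: by Paley--Wiener--Schwartz, a common UV majorant of two convolution kernels need not exist in $\DDz$). The paper avoids this entirely by working with an arbitrary Dirac-delta approximating \emph{sequence} and then checking, via Eq.~(\ref{eqDistr4}), that the limit pairs correctly with all simple tensors --- which also renders independence of the chosen sequence automatic, with no appeal to Lemma~\ref{lemLagrange} needed at that point.
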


\begin{proof}
Fix a $w\in\Wvn$, and define its corresponding symmetric linear map $\widetilde{\underline{w}}:\,{\otimes}^{n}\DD\rightarrow\RR$. 
For all $g\in\DD$ and $f_{1},\dots,f_{n}\in\DD$, one has the identity
\begin{eqnarray}
 \Big. \widetilde{\underline{w}}(C_{g}f_{1}\otimes\dots\otimes C_{g}f_{n}) = \widetilde{w}(C_{g}f_{1},\dots,C_{g}f_{n}) = \big(\Piv_{(C_{g}f_{1})^{t},\dots,(C_{g}f_{n})^{t}}w\big)(0) \cr
 \Big. = \big(\Piv_{C_{f_{1}^{t}}g^{t},\dots,C_{f_{n}^{t}}g^{t}}w\big)(0),
\label{eqDistr1}
\end{eqnarray}
which further equals to
\begin{eqnarray}
 \big(C^{\vee}_{f_{1}^{t},\dots,f_{n}^{t}}\Piv_{g^{t},\dots,g^{t}}w\big)(0) = ( \Piv_{g^{t},\dots,g^{t}}w \,\vert\, f_{1}\otimes\dots\otimes f_{n}),
\label{eqDistr2}
\end{eqnarray}
where the totally symmetric function $\Piv_{g^{t},\dots,g^{t}}w\in\EEn$ was regarded as a distribution. 
Moreover, due to the commutativity of convolution, the right hand side of Eq.(\ref{eqDistr1}) further equals to
\begin{eqnarray}
 \big(\Piv_{C_{g^{t}}f_{1}^{t},\dots,C_{g^{t}}f_{n}^{t}}w\big)(0) = \big({\otimes}^{n}C_{g^{t}}\,\Piv_{f_{1}^{t},\dots,f_{n}^{t}}w\big)(0).
\label{eqDistr3}
\end{eqnarray}
In total, one arrives at the identity
\begin{eqnarray}
 \forall\,f_{1}\otimes\dots\otimes f_{n}\in{\otimes}^{n}\DD: \cr
 \Big. \qquad\qquad ( \Piv_{g^{t},\dots,g^{t}}w \,\vert\, f_{1}\otimes\dots\otimes f_{n}) = \big({\otimes}^{n}C_{g^{t}}\,\Piv_{f_{1}^{t},\dots,f_{n}^{t}}w\big)(0),
\label{eqDistr4}
\end{eqnarray}
for given $g\in\DD$. 
Take a Dirac delta approximating sequence $g_{i}\in\DD$ ($i\in\N_{0}$), then 
from Eq.(\ref{eqDistr4}) it follows that the sequence of totally symmetric distributions 
$( \Piv_{g_{i}^{t},\dots,g_{i}^{t}}w \,\vert\, \cdot)\in\DDn'$ ($i\in\N_{0}$) 
is pointwise convergent on the subspace ${\otimes}^{n}\DD\subset\DDn$. 
\ref{secApp}~Lemma~\ref{lemAlgTensB} then implies 
that there exists a unique totally symmetric distribution 
$\overline{\widetilde{w}}\in\DDn'$, such that the sequence of totally symmetric distributions 
$\big(( \Piv_{g_{i}^{t},\dots,g_{i}^{t}}w \,\vert\, \cdot)-(\overline{\widetilde{w}}\,\vert\,\cdot)\big)\in\DDn'$ ($i\in\N_{0}$) 
converges to zero pointwise on the full $\DDn$. Moreover, Eq.(\ref{eqDistr4}) implies that 
$(\overline{\widetilde{w}}\,\vert\,f_{1}\otimes\dots\otimes f_{n})=\widetilde{w}(f_{1},\dots,f_{n})$ 
holds for all $f_{1},\dots,f_{n}\in\DD$, and therefore also 
$(\overline{\widetilde{w}}\,\vert\,f_{1}\otimes\dots\otimes f_{n})=\widetilde{\underline{w}}(f_{1}\otimes\dots\otimes f_{n})$ 
holds.
\end{proof}

\begin{remark}
The linear map $k:\,\Wvn\rightarrow\Map({\times}^{n}\DD,\RR)$ can be considered as distribution valued, 
i.e.\ the notation
\begin{eqnarray}
 k:\quad \Wvn\longrightarrow \DDvn{}',\quad w\longmapsto\widetilde{w}
\end{eqnarray}
is justified, via identifying $\widetilde{w}$ and $\widetilde{\underline{w}}$ and $\overline{\widetilde{w}}$.
\end{remark}

We are now in position to state and prove the main result of the paper, statement~(B) in Section~\ref{secIntro}. 
Roughly speaking, it says that symmetric Wilsonian generalized functions are in fact nothing more than distributions.

\begin{theorem}
The distribution valued linear map
\begin{eqnarray}
 k:\quad \Wvn\longrightarrow \DDvn{}',\quad w\longmapsto \widetilde{w}
\end{eqnarray}
is the inverse of the natural continuous linear injection
\begin{eqnarray}
 j:\quad \DDvn{}'\longrightarrow \Wvn,\quad \omega\longmapsto \widehat{\omega}.
\end{eqnarray}
\end{theorem}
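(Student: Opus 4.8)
The plan is to prove that $k$ and $j$ are mutually inverse by checking the two compositions $k\circ j=\mathrm{id}_{\DDvn{}'}$ and $j\circ k=\mathrm{id}_{\Wvn}$ separately; since $j$ is already known to be a continuous linear injection, establishing both of these suffices. The case $n=0$ is trivial, so I take $n\geq 1$ throughout.

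For the easy direction, $k\circ j=\mathrm{id}$, I would fix a (symmetric) $\omega\in\DDvn{}'$ and merely unfold the definitions. Using the identity recorded just before the definition of $\ell$, namely $\big(\Piv_{f_1^t,\dots,f_n^t}\widehat\omega\big)(0)=\big(C^{\vee}_{f_1^t,\dots,f_n^t}\omega\big)(0)=(\Sym(\omega)\,\vert\,f_1\otimes\dots\otimes f_n)$, together with $\Sym(\omega)=\omega$, one gets $\widetilde{\widehat\omega}(f_1,\dots,f_n)=(\omega\,\vert\,f_1\otimes\dots\otimes f_n)$ for all $f_j\in\DD$. Hence $\overline{\widetilde{\widehat\omega}}=\omega$, i.e.\ $k(j(\omega))=\omega$.

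For the hard direction, $j\circ k=\mathrm{id}$, I would fix $w\in\Wvn$, set $\omega:=\overline{\widetilde w}=k(w)$, and show $\widehat\omega=w$, i.e.\ ${\otimes}^nC_f\,\omega=w(f)=\Pi_f w$ for every $f\in\DDz$. The first step is a purely computational reduction. Pairing ${\otimes}^nC_f\,\omega$ with an arbitrary $g_1\otimes\dots\otimes g_n$, passing the Dirac-sequence limit defining $\omega$ through the pairing, transferring $C_f$ onto the test side via its transpose $C_{f^t}$, and repeatedly invoking the reflection rule $(C_{f^t}g)^t=C_f g^t$, the homomorphism identity ${\otimes}^nC_f\,\Piv_{\cdots}w=\Piv_{C_f\cdots}w$, and the symmetry identity Eq.(\ref{eqDistr4}), the pairing collapses to $(\Piv_{f,\dots,f}w\,\vert\,g_1\otimes\dots\otimes g_n)$. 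Since tensor products are separating for symmetric distributions, this yields the clean intermediate identity ${\otimes}^nC_f\,\omega=\Piv_{f,\dots,f}w$ for every $f\in\DDz$, with no homogeneity input so far.

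It therefore remains to prove the diagonal identity $\Piv_{f,\dots,f}w=\Pi_f w$ for all $f\in\DDz$, which is exactly the assertion that $f\mapsto w(f)$ is $n$-homogeneous, and this is the main obstacle. It does \emph{not} follow from the RGE alone, because rescaling a mollifier by a scalar is not realized by any convolution operator. The key observation is that homogeneity is nevertheless automatic on convolutions: for $a,b\in\DDz$ and any scalar $c$ one has $C_{ca}=c\,C_a$, whence $w\big(c\,(a\star b)\big)=w\big((ca)\star b\big)={\otimes}^nC_{ca}\,w(b)=c^n\,{\otimes}^nC_a\,w(b)=c^n\,w(a\star b)$ by the RGE. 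Feeding the values $c=0,1,\dots,n$ (with the convention $\Pi_0 w=0$) into the polarization sum and using the identity $\sum_{k=0}^{n}\binom{n}{k}(-1)^{n-k}k^n=n!$ gives $\Piv_{h,\dots,h}w=\Pi_h w$ for every $h$ of the form $h=a\star b$. To pass from convolutions to arbitrary $f$ I would invoke \ref{secApp}~Lemma~\ref{lemLagrange}: for any $g\in\DDz$, the homomorphism identity and the RGE give ${\otimes}^nC_g\big(\Piv_{f,\dots,f}w-\Pi_f w\big)=\Piv_{C_g f,\dots,C_g f}w-\Pi_{C_g f}w$, and since $C_g f=g\star f$ is a convolution this vanishes by the previous step; as this holds for all $g$, injectivity of simultaneous mollification forces $\Piv_{f,\dots,f}w=\Pi_f w$ for all $f\in\DDz$. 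Combining this with the intermediate identity of the previous paragraph yields ${\otimes}^nC_f\,\omega=w(f)$ for all $f$, i.e.\ $j(k(w))=\widehat\omega=w$, and together with the easy direction we conclude $k=j^{-1}$. The crux is recognizing that $n$-homogeneity, although not a formal consequence of the RGE for general mollifiers, holds freely on convolutions (the scale factor being absorbable into one convolution factor), and that convolutions are cofinal enough for the injectivity lemma to propagate the identity everywhere.
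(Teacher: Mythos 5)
Your proof is correct, and up to a point it parallels the paper's own argument: the direction $k\circ j=\mathrm{id}$ is the same unfolding of definitions plus density of ${\otimes}^{n}\DD$, and your computation of the intermediate identity ${\otimes}^{n}C_{f}\,k(w)=\Piv_{f,\dots,f}w$ uses the same ingredients (the rule $\Piv_{\cdots}\widehat{\omega}=C^{\vee}_{\cdots}\omega$, commutativity of convolution, the reflection rule, the identities around Eq.(\ref{eqDistr4}), and separation by simple tensors) that the paper uses to show $\Piv_{f_{1},\dots,f_{n}}j(k(w))=\Piv_{f_{1},\dots,f_{n}}w$ for all $f_{1},\dots,f_{n}$. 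The genuine difference is the endgame. The paper passes from equality of all polarized projections to $j(k(w))=w$ with the words ``which implies'', and that implication is precisely the assertion that the family $\{\Piv_{f_{1},\dots,f_{n}}\}$ separates points of $\Wvn$; equivalently --- specializing to $f_{1}=\dots=f_{n}=f$ and using that the diagonal identity $\Piv_{f,\dots,f}\widehat{\omega}=\Pi_{f}\widehat{\omega}$ holds trivially on the range of $j$ --- it is the identity $\Piv_{f,\dots,f}w=\Pi_{f}w$, i.e.\ the $n$-homogeneity of $f\mapsto w(f)$, for an \emph{arbitrary} $w\in\Wvn$. This is exactly the step you isolate as the crux and then actually prove: homogeneity is not termwise forced by the RGE (a nontrivial scalar multiple of a mollifier is never itself a convolution of it, by Paley--Wiener and Riemann--Lebesgue), but it does hold on convolutions since $c\,(a\star b)=(ca)\star b$ absorbs the scalar into a kernel, the polarization sum then collapses via $\sum_{k=0}^{n}(-1)^{n-k}\frac{n!}{k!\,(n-k)!}\,k^{n}=n!$, and Lemma~\ref{lemLagrange} propagates the identity from convolutions to all $f$ because ${\otimes}^{n}C_{g}\big(\Piv_{f,\dots,f}w-\Pi_{f}w\big)=\Piv_{C_{g}f,\dots,C_{g}f}w-\Pi_{C_{g}f}w=0$ for every $g$. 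So your route supplies, explicitly and correctly, the separation/homogeneity step that the paper's proof leaves implicit (its ingredients do appear in the paper, inside the proof of the $n$-linearity lemma, as the identity $\Piv_{C_{h_{1}}g,\dots,C_{h_{n}}g}w=C^{\vee}_{h_{1},\dots,h_{n}}\Pi_{g}w$, but they are never invoked at the point of the main theorem where they are needed): the paper's version buys brevity, yours buys a complete and self-contained argument.
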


\begin{proof}
Let $\omega\in\DDvn{}'$. Then, for all $f_{1},\dots,f_{n}\in\DD$ the identity
\begin{eqnarray}
 \Big. \big(k(j(\omega)) \,\big\vert\, f_{1}\otimes\dots\otimes f_{n}\big) = \big(k(\widehat{\omega}) \,\big\vert\, f_{1}\otimes\dots\otimes f_{n}\big) = \big(\Piv_{f_{1}^{t},\dots,f_{n}^{t}}\,\widehat{\omega}\big)(0) \cr
 \Big. = \big(C^{\vee}_{f_{1}^{t},\dots,f_{n}^{t}}\,\omega\big)(0) = (\omega\,\vert\,f_{1}\otimes\dots\otimes f_{n})
\end{eqnarray}
holds. This implies that the distributions $k(j(\omega))$ and $\omega$ coincide 
on the dense subspace ${\otimes}^{n}\DD\subset\DDn$, and therefore $k(j(\omega))=\omega$.

Let $w\in W$. Then, for all $g\in\DD$ and $f_{1},\dots,f_{n}\in\DD$, the smooth function 
$\Piv_{f_{1},\dots,f_{n}}j(k(w))\in\EEvn$ can be also regarded as a distribution, 
and one has the identity
\begin{eqnarray}
 \Big. \big( \Piv_{f_{1},\dots,f_{n}}j(k(w)) \,\big\vert\, {\otimes}^{n}g^{t} \big) = \big( \Piv_{f_{1},\dots,f_{n}}j(\widetilde{w}) \,\big\vert\, {\otimes}^{n}g^{t} \big) = \big( C^{\vee}_{f_{1},\dots,f_{n}}\widetilde{w} \,\big\vert\, {\otimes}^{n}g^{t} \big) \cr
 \Big. = \big( \widetilde{w} \,\big\vert\, C^{\vee}_{f_{1}^{t},\dots,f_{n}^{t}} ({\otimes}^{n}g^{t}) \big) = \big( \widetilde{w} \,\big\vert\, \Sym(C_{f_{1}^{t}}g^{t}\otimes\dots\otimes C_{f_{n}^{t}}g^{t}) \big) \cr
 \Big. = \big( \Piv_{(C_{f_{1}^{t}}g^{t})^{t},\dots,(C_{f_{n}^{t}}g^{t})^{t}} w \big)(0) = \big( \Piv_{C_{g}f_{1},\dots,C_{g}f_{n}} w \big)(0) \cr
 \Big. = \big( {\otimes}^{n}C_{g}\, \Piv_{f_{1},\dots,f_{n}} w \big)(0) = \big( \Piv_{f_{1},\dots,f_{n}}w \,\big\vert\, {\otimes}^{n}g^{t} \big)
\end{eqnarray}
where in the last two terms the smooth function $\Piv_{f_{1},\dots,f_{n}}w\in\EEvn$ was regarded as a distribution. 
Since $\Span\big\{ {\otimes}^{n}g^{t}\in\DDvn \,\big\vert\, g\in\DD \big\}$ separates points 
for totally symmetric smooth functions (\ref{secApp}~Lemma~\ref{lemLagrange}), it follows that 
for all $f_{1},\dots,f_{n}\in\DD$ the identity $\Piv_{f_{1},\dots,f_{n}}j(k(w))=\Piv_{f_{1},\dots,f_{n}}w$ holds, 
which implies $j(k(w))=w$.
\end{proof}

So far we have not said anything on whether the continuous bijection $j$ is a topological isomorphism between $\DDvn{}'$ and $\Wvn$, that is, whether its inverse map $k$ is continuous or not. Although we did not manage to answer this question, as a concluding result we show that  $k$ has certain weaker continuity properties.

\begin{theorem}
The distribution valued linear bijection
\begin{eqnarray}
 k:\quad \Wvn\longrightarrow\DDvn{}',\quad w\longmapsto\widetilde{w}
\end{eqnarray}
is continuous when the target space $\DDvn{}'$ is equipped with the weak dual topology against the 
subspace ${\otimes}^{n}\DD$. With the canonical topologies, $k$ is sequentially continuous.
\end{theorem}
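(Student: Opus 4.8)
The plan is to treat the two continuity statements separately, dispatching the weak-dual assertion first and then bootstrapping the sequential assertion from it.

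The first assertion is the more elementary one. The weak dual topology against ${\otimes}^{n}\DD$ is generated by the seminorms $\omega\mapsto|(\omega\,|\,\psi)|$ with $\psi$ ranging over ${\otimes}^{n}\DD$, so by linearity it suffices to check that $w\mapsto\big(k(w)\,\big|\,f_{1}\otimes\dots\otimes f_{n}\big)$ is continuous on $\Wvn$ for each fixed $f_{1},\dots,f_{n}\in\DD$. By the definition of $k$ this functional equals $\big(\Piv_{f_{1}^{t},\dots,f_{n}^{t}}w\big)(0)$, a finite linear combination of the maps $w\mapsto(\Pi_{g}w)(0)$. Each of these is continuous, being the composite of the canonical projection $\Pi_{g}\colon\Wvn\to\EEvn$ (continuous by the projective limit construction) with the point evaluation $\EEvn\to\RR$, $\phi\mapsto\phi(0)$ (continuous in the $\EE$-type topology). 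Hence $k$ is continuous into $\DDvn{}'$ equipped with the weak dual topology against ${\otimes}^{n}\DD$.

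For the sequential statement I would reduce, by linearity, to showing that $w_{i}\to 0$ in $\Wvn$ implies $\widetilde{w_{i}}\to 0$ in the strong dual $\DDvn{}'$. The key input is the exact identity furnished by the preceding theorem: since $j\circ k=\mathrm{id}$, every $w\in\Wvn$ satisfies $w=\widehat{\widetilde{w}}$, whence $\Piv_{f_{1},\dots,f_{n}}w=C^{\vee}_{f_{1},\dots,f_{n}}\widetilde{w}=\Sym(f_{1}\otimes\dots\otimes f_{n})\star\widetilde{w}$ for all $f_{1},\dots,f_{n}\in\DD$. Now $w_{i}\to 0$ in the projective limit topology means exactly that $\Pi_{g}w_{i}\to 0$ in $\EEvn$ for every $g\in\DDz$; as $\Piv_{f_{1},\dots,f_{n}}$ is a fixed finite combination of such projections, it follows that $\Sym(f_{1}\otimes\dots\otimes f_{n})\star\widetilde{w_{i}}\to 0$ in $\EEvn$, hence also as distributions. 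Pairing against an arbitrary $\psi\in\DDvn$ and using the transpose property of convolution, this yields $(\widetilde{w_{i}}\,|\,K\star\psi)\to 0$ for every symmetric tensor kernel $K\in\Sym({\otimes}^{n}\DD)$ and every $\psi\in\DDvn$ (the reflection $K\mapsto K^{t}$ is immaterial, as it preserves the class of such kernels).

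The remaining task is to pass from these ``smeared'' test functions to arbitrary ones, and this is where I expect the real difficulty to lie. The set on which convergence is already established is $\mathcal R:=\Span\{\,K\star\psi\;:\;K\in\Sym({\otimes}^{n}\DD),\ \psi\in\DDvn\,\}$, and the crux is to prove $\mathcal R=\DDvn$; then $(\widetilde{w_{i}}\,|\,\chi)\to 0$ for every $\chi\in\DDvn$, i.e.\ $\widetilde{w_{i}}\to 0$ in the weak-$*$ topology against $\DDvn$. For $n=1$ this is precisely the statement $\DD\star\DD=\DD$ of the Dixmier--Malliavin factorization theorem. For $n\ge 2$ the obstacle is that the admissible kernels are not arbitrary elements of the $n$-fold test space but only symmetric tensor products, so one must establish the analogous factorization $\DDvn\star\DDvn=\DDvn$ for the symmetric convolution algebra and reconcile it with the restricted kernels. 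Since $\Sym({\otimes}^{n}\DD)$ is dense in $\DDvn$ and contains symmetric approximate identities $\eta^{\otimes n}$, and since convolution $\DDvn\times\DDvn\to\DDvn$ is separately continuous, I would try to run the Dixmier--Malliavin scaling argument inside the $S_{n}$-invariant subspace; this adaptation is the one genuinely non-routine step. Granting $\mathcal R=\DDvn$, the conclusion is structural: $\DDvn$ is a complemented closed subspace of the Montel space $\DDn$, hence Montel and reflexive, so its strong dual $\DDvn{}'$ is again Montel. In the strong dual of a Montel space a weak-$*$ convergent sequence converges strongly—it is weak-$*$ bounded, hence bounded by barreledness, hence relatively compact by the Heine--Borel property, and its unique weak-$*$ limit point then forces strong convergence. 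Applying this to $\widetilde{w_{i}}\to 0$ weak-$*$ gives $\widetilde{w_{i}}\to 0$ strongly, which is exactly the sequential continuity of $k$ for the canonical topologies.
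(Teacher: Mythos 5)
Your opening step (continuity into the weak dual against ${\otimes}^{n}\DD$, via the composite of the canonical projections $\Pi_{g}$ with point evaluation at $0$) is exactly the paper's argument, and your closing step (a weak-$*$ null sequence in $\DDvn{}'$ is strongly null, by barrelledness plus the Montel property) is also the paper's final move. The genuine gap sits in between, and you have flagged it yourself: the entire sequential-continuity claim hinges on the factorization $\mathcal{R}:=\Span\{\,K\star\psi : K\in\Sym({\otimes}^{n}\DD),\ \psi\in\DDvn\,\}=\DDvn$, which you do not prove but explicitly ``grant''. For $n=1$ this is indeed the Dixmier--Malliavin theorem, so your proof is complete in that case. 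But for $n\geq 2$ the statement you need is \emph{strictly stronger} than Dixmier--Malliavin on $\RR^{nN}$: the classical theorem writes a test function of $nN$ variables as a finite sum $\sum_{j}G_{j}\star\Psi_{j}$ where $G_{j}$ is an \emph{arbitrary} test function of all $nN$ variables, whereas you need the convolving kernels to lie in the algebraic tensor product ${\otimes}^{n}\DD$, i.e.\ to be finite sums of products $g_{1}\otimes\dots\otimes g_{n}$ of test functions of the individual variables. This does not follow from the classical theorem: replacing $G_{j}$ by tensor-product approximants only gives convergence \emph{to} $G_{j}\star\Psi_{j}$, and passing to such limits re-creates precisely the density-versus-equicontinuity problem the factorization was supposed to solve (pointwise convergence on a dense subspace does not propagate without an equicontinuity input). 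What you would need is an iterated ``Dixmier--Malliavin with parameters'' --- factorization in one block of variables, smoothly and uniformly in the remaining ones, with support control --- which is plausible but is itself a nontrivial re-run of the Dixmier--Malliavin machinery, not an off-the-shelf citation. As written, the second half of your proof is complete only for $n=1$.

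For comparison, the paper closes this same gap by a different, self-contained mechanism: Lemma~\ref{lemotimes} and Lemma~\ref{lemAlgTensB} of \ref{secApp} show that a sequence of distributions converging pointwise on the subspace ${\otimes}^{n}\DD\subset\DDn$ automatically converges pointwise on all of $\DDn$ to a distribution; the proof runs through the Schwartz kernel theorem, the expansion $\Psi=\sum_{j}\lambda_{j}\,\varphi_{j}\otimes\psi_{j}$ of a test function on a product compact (\cite{Treves1970}~Theorem~45.1), an iterated Banach--Steinhaus/equicontinuity argument to bound the coefficients $(\omega_{i}\,\vert\,\varphi_{j}\otimes\psi_{k})$, and dominated convergence. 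Note that your own first step already establishes exactly the hypothesis of that lemma, namely $(\widetilde{w}_{i}\,\vert\,f_{1}\otimes\dots\otimes f_{n})\to 0$ for all $f_{1},\dots,f_{n}\in\DD$; given such a Banach--Steinhaus-type result for ${\otimes}^{n}\DD$, the identity $\Piv_{f_{1},\dots,f_{n}}w=C^{\vee}_{f_{1},\dots,f_{n}}\widetilde{w}$ and the whole convolution-factorization detour become unnecessary. So either supply a proof of the blockwise/symmetric factorization $\mathcal{R}=\DDvn$, or replace that step by a lemma of the above Banach--Steinhaus type; without one of the two, the argument does not go through for $n\geq 2$.
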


\begin{proof}
Take a generalized sequence $w_{i}\in W_{n}$ ($i\in I$) such that it converges to 
$0$ in the $W_{n}$ topology. This implies that for all $f_{1},\dots,f_{n}\in\DD$ the generalized sequence 
$\Piv_{f_{1}^{t},\dots,f_{n}^{t}}w_{i}\in\EEvn$ ($i\in I$) converges to $0$ in the $\EEvn$ topology. Since the point evaluation map $\EEvn\rightarrow\RR$ is continuous, it follows that $\big(\widetilde{w}_{i}\,\big\vert\,f_{1}\otimes\dots\otimes f_{n}\big)\in\RR$ ($i\in I$) converges to $0$ in $\RR$. Hence  the generalized sequence $k(w_{i})\in\DDvn{}'$ ($i\in I$) converges to $0$ 
when the space $\DDvn{}'$ is equipped with the weak dual topology against 
${\otimes}^{n}\DD$, which proves the first statement of the theorem.

From the above, via applying \ref{secApp}~Lemma~\ref{lemAlgTensB}, the sequential 
continuity of $k$ follows when the target space is equipped with the weak-* topology. 
Then, using the Montel property of the space $\DDvn{}'$ it follows that 
the sequential continuity also holds when the target space is equipped with its 
canonical strong dual topology, which proves the second statement of the theorem.
\end{proof}

\begin{corollary}
We conclude that $\Wvn$ and $\DDvn{}'$ are isomorphic as convergence vector spaces.
\end{corollary}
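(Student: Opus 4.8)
The plan is to assemble the corollary directly from the three preceding results, with essentially no new computation required. By the symmetric analog of Theorem~\ref{thmjInj}, the map $j:\,\DDvn{}'\rightarrow\Wvn$ is a \emph{continuous} linear injection; by the main factorization theorem it is moreover a bijection, with two-sided inverse given by $k$. Since a continuous linear map between topological vector spaces is in particular sequentially continuous, $j$ carries every convergent sequence of $\DDvn{}'$ to a convergent sequence of $\Wvn$, preserving limits.

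For the reverse direction I would invoke the immediately preceding theorem, which establishes that $k=j^{-1}:\,\Wvn\rightarrow\DDvn{}'$ is sequentially continuous with respect to the canonical strong dual topology on the target. Thus $k$ likewise carries convergent sequences to convergent sequences. Putting the two facts together, $j$ is a linear bijection such that both $j$ and $j^{-1}$ are sequentially continuous; this is precisely the assertion that $j$ is an isomorphism of the underlying convergence vector spaces of $\Wvn$ and $\DDvn{}'$, in the sense recalled from \cite{Laszlo2022}~Appendix~B. So the verification reduces to quoting continuity of $j$ on one side and sequential continuity of $k$ on the other, and observing that the two are mutually inverse.

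The one point that deserves emphasis---and, I expect, the only genuine subtlety rather than a true obstacle---is \emph{why} the statement must be phrased at the level of convergence vector spaces rather than as a topological isomorphism. Sequential bicontinuity cannot in general be upgraded to bicontinuity here: neither $\Wvn$ (a projective limit of $\EE$-type spaces) nor $\DDvn{}'$ (a DLNF space) is metrizable, so the implication ``sequentially continuous $\Rightarrow$ continuous'' is unavailable, and indeed whether $k$ is continuous for the full strong topologies is left open by the preceding theorem. Consequently the cleanest conclusion one can draw is at the level of convergence structures, where only the behaviour on convergent sequences is tested; that behaviour has already been controlled in both directions, so the corollary follows at once.
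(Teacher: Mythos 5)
Your proposal is correct and matches the paper's own (implicit) reasoning: the corollary is exactly the combination of the continuity of the injection $j$, the bijectivity established by the factorization theorem, and the sequential continuity of $k=j^{-1}$ from the immediately preceding theorem, read as mutual inverses at the level of convergence structures. Your closing remark on why one cannot upgrade to a topological isomorphism (non-metrizability blocks ``sequentially continuous $\Rightarrow$ continuous'') is also precisely the point the paper flags as open.
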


\section{Concluding remarks}
\label{secConclusion}

In a QFT model, the vacuum state can be described by the Wilsonian renormalization 
group (RG) flow of the collection of the Feynman type $n$-field correlators ($n=0,1,2,\dots$). 
An RG flow is a parametric family of the collection of smoothed Feynman type 
correlators, the parameter being the strength of the UV regularization, and the 
instances with different UV regularization strengths are linked by the renormalization group equation (RGE). 
Important QFT models are those, which admit a flow meaningful at any UV regularization strength. 
Based on settings in which the Feynman measure genuinely exists, the 
distribution theoretically canonical definition of Wilsonian UV regularization was recalled: 
the UV regularization is most naturally implemented by coarse-gaining operators on the fields, 
where a coarse-graining is a kind of smoothing, analogous to convolution operator by a test function, 
i.e.\ to a momentum space damping. 
Using this notion of Wilsonian regularization, it was possible to mathematically rigorously and canonically define the space of the RG flows of 
correlators, even in a generally covariant and signature-independent setting (including Lorentzian). 
Quite naturally, flowing from the UV toward the IR means successive application of coarse-grainings after each-other, 
as seen in Eq.(\ref{eqGzC}).

It was shown that the space of coarse-graining operators admit a 
natural partial ordering, describing that one coarse-graining is less UV than 
an other. Recognizing this, the space of Wilsonian RG flows of rescaled field 
correlators reaching out to all UV regularization strengths was seen to form a projective limit space, 
made out of instances of smoothed field correlators. 
Using the known topological vector space 
properties of the smooth $n$-variate fields, and the known permanence properties 
of projective limit, the fundamental properties of the space of 
Wilsonian RG flows of rescaled correlators were established. 
That is the first main result of the paper, referred to as statement~(A): 
the flows of rescaled correlators form a topological 
vector space being Hausdorff, locally convex, complete, nuclear, semi-Montel, 
and Schwartz type space. That is, they form a generalized function space 
having many favorable properties similar to that of ordinary distributions. 
In addition, the ordinary distributional correlators can 
be naturally injected into that space by applying coarse-graining on its variables, 
i.e. via Eq.(\ref{eqAnsatz}).

It is quite natural to ask whether the above space of Wilsonian RG 
flows is much bigger than that of the subspace generated by the distributional correlators 
through Eq.(\ref{eqAnsatz}). 
The naive expectation would be that the former space is bigger than the latter one, since 
a Wilsonian RG flow is a more elaborate object in comparison to an ordinary distribution. 
Exotic UV behavior, more general than that of distributions, 
is also known to occur in other generalized function spaces, as it happens e.g.\ for the Colombeau generalized functions. 
The second main result of the paper, referred to as statement~(B), is that for bosonic fields over a flat (affine) spacetime manifold, 
the subspace generated by distributional correlators exhausts the space of 
Wilsonian RG flows of correlators. Moreover, with these conditions, 
these two spaces were found to be isomorphic in terms of their convergence 
vector space structures. Statement~(A) indicates that statement~(B) is 
likely to be generically true, 
not only for bosonic fields and flat spacetime. This conjecture 
is worth future investigations.

Physicswise, statement~(B) has the following meaning: 
for a QFT model based on bosonic fields over a flat (affine) spacetime manifold, 
the Wilsonian RG flow of Feynman type $n$-field correlators reaching out to all 
UV regularization strengths can always be legitimately factorized using the ansatz 
Eq.(\ref{eqAnsatz}), i.e.\ they are multiplicatively renormalizable. 
This factorization result is expected to come quite useful when attempting 
to solve the equation of motion of QFT for the RG flow of field correlators.\footnote{The equation of motion 
of QFT is the Wilsonian regularized master Dyson--Schwinger equation \cite{Laszlo2022} 
together with the RGE Eq.(\ref{eqGzC}). In different variables, these are 
equivalent to the better known Wetterich equation \cite{Angelo2024}. 
Since statement~(B) factors out a regularization-independent distributional 
correlator, a Hadamard-like condition can be imposed on it as a further 
regularity condition, in the spirit of Radzikowski \cite{Radzikowski1996}. 
Namely, one can require its wave front set to be minimal with respect to 
the subset relation, along with a positivity condition. It is seen that 
statement~(B) is central for these.}

\section*{Acknowledgments}

The authors would like to thank Antal Jakov\'ac, Gergely Fej\H{o}s, J\'anos Balog, J\'anos Pol\'onyi 
for the valuable discussions on the theory of Wilsonian regularization and 
renormalization. We thank to Karol Kozlowski for discussions on exactly 
solvable models.

This work was supported in part by the Hungarian Scientific 
Research fund (NKFIH K-138152 and K-142423).
Z.~Tarcsay was supported by the J\'anos Bolyai Research Scholarship of 
the Hungarian Academy of Sciences, and by the
\'UNKP--22-5-ELTE-1096 New National Excellence Program of the Ministry
for Innovation and Technology, as well as Project no.\  TKP2021-NVA-09, which 
has been implemented with the support provided by the Ministry of Culture and Innovation 
of Hungary from the National Research, Development and Innovation Fund, 
financed under the TKP2021-NVA funding scheme.

\appendix

\section{Some facts on distributions}
\label{secApp}

Throughout this Appendix, the notations of Section~\ref{secBF} are used. 
In particular, the base manifold $\M$ is a finite dimensional real affine space. 
(Without loss of generality, one may assume $\M:=\RR^{N}$.) 
Moreover, instead of generic coarse-graining operators, merely convolution 
operators by test functions are used, as a special case. 
Also, for simplifying the notations, without loss of generality, only scalar 
valued smooth functions, test functions and distributions are discussed here.

\begin{remark}[some complications of topological vector spaces]
Recall that for $n\in\N_{0}$, we use the notation $\EEn$ for the space of 
${\times}^{n}\M\rightarrow\RR$ smooth functions with their standard 
smooth function topology, and $\DDn$ for the compactly supported functions 
from these with their standard test function topology. 
The spaces $\EE$ and $\EEn$ are known to be nuclear Fr\'echet (NF) spaces 
(see \cite{Treves1970}~Theorem51.5 and its Corollary). 
The spaces $\DD$ and $\DDn$ are known to be countable strict inductive limit 
of NF spaces with closed adjacent images (LNF space, see \cite{Treves1970}~Ch13-6~ExampleII). 
It is customary to denote by 
$\otimes^{n}\EE$ and $\otimes^{n}\DD$ the $n$-fold algebraic tensor product 
of $\EE$ and $\DD$ with themselves, by $\otimes_{\pi}^{n}\EE$ and $\otimes_{\pi}^{n}\DD$ 
these spaces equipped with the so-called projective tensor product topology, 
moreover by $\hat{\otimes}_{\pi}^{n}\EE$ and $\hat{\otimes}_{\pi}^{n}\DD$ 
the topological completions of these. 
The Schwarz kernel theorem says that $(\hat{\otimes}_{\pi}^{n}\EE')'$ and $\hat{\otimes}_{\pi}^{n}\EE$ and $\EEn$ 
are naturally topologically isomorphic, moreover that 
$(\hat{\otimes}_{\pi}^{n}\EE)'$ and $\hat{\otimes}_{\pi}^{n}\EE'$ and $\EEn'$ are naturally topologically 
isomorphic (\cite{Treves1970}~Theorem51.6 and its Corollary). 
The distributional version of the Schwarz kernel theorem says that 
the spaces $\hat{\otimes}_{\pi}^{n}\DD'$ and $\DDn'$ 
are naturally topologically isomorphic (\cite{Treves1970}~Theorem51.7), 
moreover that there is a natural continuous linear bijection 
$(\hat{\otimes}_{\pi}^{n}\DD)'\rightarrow\DDn'$ 
(\cite{Horvath1966}~Chapter4.8~Proposition7). 
Care must be taken, however, that its inverse map is not continuous 
(\cite{Hirai2001}~Theorem2.4 and Remark2.1), 
i.e.\ the pertinent natural map is not a topological isomorphism. 
The corresponding transpose of the above statement says that 
the spaces $(\hat{\otimes}_{\pi}^{n}\DD')'$ and $\DDn$ 
are naturally topologically isomorphic, and that there is the 
natural continuous linear bijection $\DDn\rightarrow\hat{\otimes}_{\pi}^{n}\DD$, 
but its inverse map fails to be continuous. 
For this reason, one should distinguish in notation the spaces 
$\hat{\otimes}_{\pi}^{n}\DD$, $\DDn$ and 
$(\hat{\otimes}_{\pi}^{n}\DD)'$, $\DDn'$, respectively, due to their different topologies. 
That is, on the spaces $\DDn$ or $\DDn'$, there are multiple 
complete nuclear Hausdorff locally convex vector topologies which are 
comparable and inequal. 
On the $\EEn$ or $\EEn'$ type spaces, such 
complication is not present, due to their metrizability or dual metrizability, 
respectively. 
Also, these complications are absent if the above spaces are regarded rather 
as convergence vector spaces \cite{Beattie1996}.
\end{remark}

\begin{lemma}[a form of Lagrange lemma]
For all $\omega\in\DDn'$, the property $\forall g\in\DD:\,{\otimes}^{n}C_{g}\,\omega=0$ implies $\omega=0$.
(Therefore, such statement is also true when $\omega\in\EEn$.)
\label{lemLagrange}
\end{lemma}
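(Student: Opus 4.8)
The plan is to adapt, almost verbatim, the weak-$*$ limiting argument already used for the injectivity of $j$ in Theorem~\ref{thmjInj}. Since $\omega\in\DDn'$, it suffices to show that $(\omega\,\vert\,\phi)=0$ for every $\phi\in\DDn$. To this end I would fix a Dirac delta approximating sequence $g_{i}\in\DD$ ($i\in\N_{0}$); its reflection $g_{i}^{t}$ is again Dirac delta approximating, and the tensor power ${\otimes}^{n}g_{i}^{t}$ is then a Dirac delta approximating sequence (mollifier) on the product manifold ${\times}^{n}\M$. The point is that a single such approximating sequence, even though it only ranges over equal-factor tensor powers, already suffices to recover $\omega$, so no genuine loss is incurred by the restriction to ${\otimes}^{n}C_{g}$ with a common $g$.

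The key computation is the transpose relation for convolution. For each $i$ and each $\phi\in\DDn$ one has
\[
 \big(\,{\otimes}^{n}C_{g_{i}}\,\omega\,\big\vert\,\phi\,\big)
 \;=\; \big(\,\omega\,\big\vert\,{\otimes}^{n}C_{g_{i}^{t}}\,\phi\,\big)
 \;=\; \big(\,\omega\,\big\vert\,({\otimes}^{n}g_{i}^{t})\star\phi\,\big),
\]
where in the last expression $\star$ denotes convolution on ${\times}^{n}\M$ and I used that applying $C_{g_{i}^{t}}$ in each variable coincides with convolving by the product mollifier ${\otimes}^{n}g_{i}^{t}$ on the product space. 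The first equality is just the fact that the transpose of $C_{g}$ is $C_{g^{t}}$, extended to all $n$ variables.

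The only step requiring a real argument is the standard mollification fact that $({\otimes}^{n}g_{i}^{t})\star\phi\rightarrow\phi$ in the test function topology of $\DDn$ as $i\rightarrow\infty$; this is the main, though routine, obstacle. It reduces to two familiar points: (i) a uniform support bound, since $\mathrm{supp}\big(({\otimes}^{n}g_{i}^{t})\star\phi\big)\subseteq\mathrm{supp}({\otimes}^{n}g_{i}^{t})+\mathrm{supp}\,\phi$ lies eventually in a fixed compact neighborhood of $\mathrm{supp}\,\phi$; and (ii) uniform convergence of every derivative, which follows from $\partial^{\alpha}\big(({\otimes}^{n}g_{i}^{t})\star\phi\big)=({\otimes}^{n}g_{i}^{t})\star\partial^{\alpha}\phi$ together with the uniform continuity of $\partial^{\alpha}\phi$.

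Finally, the hypothesis applied to $g=g_{i}$ gives ${\otimes}^{n}C_{g_{i}}\,\omega=0$ for every $i$, so the left-hand side of the displayed identity vanishes for all $i$. Passing to the limit and invoking the continuity of $\omega\in\DDn'$ against the convergent sequence $({\otimes}^{n}g_{i}^{t})\star\phi\rightarrow\phi$ yields $(\omega\,\vert\,\phi)=0$. As $\phi\in\DDn$ was arbitrary, $\omega=0$. The parenthetical claim for $\omega\in\EEn$ then follows immediately from the canonical embedding $\EEn\subset\DDn'$, applied to a smooth $\omega$ regarded as a distribution.
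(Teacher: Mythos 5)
Your proof is correct and takes essentially the same route as the paper's: both rest on mollification by a Dirac delta approximating sequence $g_{i}$, observing that the hypothesis forces every $\otimes^{n}C_{g_{i}}\,\omega$ to vanish while $\otimes^{n}C_{g_{i}}\,\omega\rightarrow\omega$ in the weak-$*$ sense, so $\omega=0$. The only difference is cosmetic: the paper cites the weak-$*$ convergence of mollified distributions as a known fact, whereas you unpack it via the transpose relation and the convergence $(\otimes^{n}g_{i}^{t})\star\phi\rightarrow\phi$ in $\DDn$, which is a legitimate (and standard) way to justify the same step.
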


\begin{proof}
Whenever $\omega\in\DDn'$ is arbitrary and $g_{i}\in\DD$ ($i\in\N_{0}$) is a Dirac delta 
approximating sequence, then the sequence ${\otimes}^{n}C_{g_{i}}\,\omega\in\EEn\subset\DDn'$ ($i\in\N_{0}$) 
converges to $\omega\in\DDn'$ in the weak-* topology. If 
$\omega\in\DDn'$ were such that $\forall g\in\DD:\,{\otimes}^{n}C_{g}\,\omega=0$ holds, 
then for a Dirac delta approximating sequence as above, the sequence 
${\otimes}^{n}C_{g_{i}}\,\omega\in\EEn\subset\DDn'$ ($i\in\N_{0}$) would be 
all zero, therefore its weak-* limit would be zero, being equal to $\omega$ by means of 
the above observation. Therefore, $\omega=0$ would follow.
\end{proof}

\begin{lemma}[the key lemma]
Let $\omega_{i}\in\DD_{m+n}'$ ($i\in\N_{0}$) be a sequence of distributions 
converging pointwise on the subspace $\DD_{m}\otimes\DDn$ of $\DD_{m+n}$. 
Then, it converges pointwise on the full $\DD_{m+n}$.
\label{lemotimes}
\end{lemma}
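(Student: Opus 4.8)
The plan is to combine the Schwartz kernel theorem with a twofold Banach--Steinhaus argument and then to exploit the nuclear (Grothendieck) series representation of a compactly supported test function. The one genuinely delicate point, already flagged in the preceding remark, is that $\DD_{m+n}$ does not coincide topologically with $\DD_{m}\hat\otimes_\pi\DDn$, so a naive ``equicontinuity plus density of $\DD_{m}\otimes\DDn$'' argument on the whole space is awkward. I will sidestep this by localising each individual $F$ to a fixed compact product support, where the (Fr\'echet) kernel theorem is an honest topological isomorphism.

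First I would represent each $\omega_i\in\DD_{m+n}{}'$ by its kernel, a continuous linear map $K_i\colon\DD_{m}\to\DDn{}'$ with $(\omega_i\,\vert\,\phi\otimes\psi)=(K_i\phi\,\vert\,\psi)$. Since the pairing and the tensor construction are linear, the hypothesis is exactly that $(K_i\phi\,\vert\,\psi)$ converges for all $\phi\in\DD_m,\psi\in\DDn$. Fixing $\phi$, the sequence $(K_i\phi)_i$ is weak-$*$ convergent, hence weak-$*$ bounded in $\DDn{}'$; as $\DDn$ is barrelled this forces $(K_i\phi)_i$ to be strongly bounded. Thus $\{K_i\}$ is pointwise bounded, and since $\DD_m$ is barrelled the Banach--Steinhaus theorem for maps into a locally convex space makes $\{K_i\}$ equicontinuous. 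Equicontinuity then yields, for any bounded $A\subset\DD_m$ and $B\subset\DDn$, a constant $r$ with $\sup_i\sup_{\phi\in A,\psi\in B}|(K_i\phi\,\vert\,\psi)|\le r$.

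Next, for an arbitrary $F\in\DD_{m+n}$ I would choose compact sets $K,L$ with $\mathrm{supp}\,F\subset K\times L$, so that $F$ lies in the nuclear Fr\'echet space $\DD_{K\times L}$. For compact factors the kernel theorem is a genuine isomorphism $\DD_{K\times L}\cong\DD_K\hat\otimes_\pi\DD_L$, so Grothendieck's representation theorem provides $F=\sum_k\lambda_k\,\phi_k\otimes\psi_k$ with $\sum_k|\lambda_k|<\infty$ and null sequences $\phi_k\to0$ in $\DD_K\subset\DD_m$, $\psi_k\to0$ in $\DD_L\subset\DDn$. As $\phi_k\otimes\psi_k\to0$ in $\DD_{K\times L}$ and each $\omega_i$ is continuous there, the series may be paired termwise: $(\omega_i\,\vert\,F)=\sum_k\lambda_k\,(K_i\phi_k\,\vert\,\psi_k)$.

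Finally I would take $A,B$ to be bounded sets absorbing the null sequences $(\phi_k)$ and $(\psi_k)$; the uniform estimate of the second step then gives $|\lambda_k\,(K_i\phi_k\,\vert\,\psi_k)|\le r\,|\lambda_k|$ for all $i$, with $\sum_k r\,|\lambda_k|<\infty$. Each summand converges in $i$ by hypothesis, so dominated convergence for series shows that $(\omega_i\,\vert\,F)$ converges; since $F\in\DD_{m+n}$ was arbitrary, this is the assertion. The crux is precisely this passage from pure tensors to a general $F$: the topological gap between $\DD_{m+n}$ and $\DD_m\hat\otimes_\pi\DDn$ blocks the direct density route, and it is the combination of the uniform kernel bound with the absolutely convergent Grothendieck expansion on a compact product support that repairs it.
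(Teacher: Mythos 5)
Your proof is correct and follows essentially the same route as the paper's: localize to a compact product support $\K\times\ELL$, expand $F$ via Grothendieck's series representation in $\DD_{m}(\K)\hat{\otimes}_{\pi}\DDn(\ELL)$, pair termwise, obtain a uniform bound on the coefficients $(\omega_{i}\,\vert\,\phi_{k}\otimes\psi_{k})$ through the Schwartz kernel theorem plus Banach--Steinhaus equicontinuity, and finish by dominated convergence for series. The only (cosmetic) divergence is in the uniform bound: you upgrade weak-$*$ boundedness of $(K_{i}\phi)_{i}$ to strong boundedness via barrelledness of $\DDn$, whereas the paper upgrades weak-$*$ convergence to strong convergence via the Montel property of $\DDn'$ and then applies Banach--Steinhaus a second time; the two devices are interchangeable here.
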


\begin{proof}
Let $\Psi\in\DD_{m+n}$, then there exists compact sets $\K\subset\times^{m}\M$ and $\ELL\subset\times^{n}\M$, 
such that $\Psi\in\DD_{m+n}(\K\times\ELL)\equiv\DD_{m}(\K)\hat{\otimes}_{\pi}\DDn(\ELL)$, 
with $\DD_{m+n}(\K\times\ELL)$ and $\DD_{m}(\K)$ and $\DDn(\ELL)$ being the 
corresponding nuclear Fr\'echet spaces of smooth functions with restricted support. 
Moreover, one has the identity
\begin{eqnarray}
 \Psi = \sum_{j\in\N_{0}} \lambda_{j}\, \varphi_{j} {\otimes} \psi_{j} \quad (\forall j\in\N_{0}:\; \lambda_{j}\in\RR,\; \varphi_{j}\in\DD_{m}(\K),\; \psi_{j}\in\DDn(\ELL))
\end{eqnarray}
where the sum is absolutely convergent in the $\DD_{m+n}(\K\times\ELL)$ topology, 
the sequence $\lambda_{j}\in\RR$ ($j\in\N_{0}$) is absolutely summable, 
and the sequence $\varphi_{j}\in\DD_{m}(\K)$ ($j\in\N_{0}$) as well as the sequence 
$\psi_{j}\in\DDn(\ELL)$ ($j\in\N_{0}$) are convergent to zero in the 
$\DD_{m}(\K)$ and $\DDn(\ELL)$ topology, respectively (\cite{Treves1970}~ChIII.45~Theorem45.1). 
Therefore, the pertinent convergences also hold in the spaces 
$\DD_{m+n}$ and $\DD_{m}$ and $\DDn$, respectively, due to the definition of 
their topologies. Using this, one infers
\begin{eqnarray}
 \forall i\in\N_{0}:\quad (\omega_{i}\,\vert\,\Psi) \,=\, \Big(\omega_{i} \,\Big\vert\, \sum_{j\in\N_{0}} \lambda_{j}\, \varphi_{j}{\otimes}\psi_{j} \Big) \,=\, \sum_{j\in\N_{0}} \lambda_{j}\, (\omega_{i} \,\vert\, \varphi_{j}{\otimes}\psi_{j} )
\label{eqLebesgue}
\end{eqnarray}
due to the continuity of the linear maps $\omega_{i}:\,\DD_{m+n}\rightarrow\RR$ ($i\in\N_{0}$). 
Moreover, due to the assumptions of the theorem, one has
\begin{eqnarray}
 \forall j\in\N_{0}:\quad \mathrm{the\; real\; valued\; sequence }\; i\mapsto(\omega_{i}\,\vert\,\varphi_{j}{\otimes}\psi_{j}) \;\mathrm{ is\; convergent}.
\label{eqLimit}
\end{eqnarray}
At the end of the proof we will show that the set of coefficients
\begin{eqnarray}
 \big\{\; (\omega_{i}\,\vert\,\varphi_{j}\otimes\psi_{j})\in\RR \;\,\big\vert\;\, i,j\in\N_{0} \;\big\} \;\subset\; \RR
\label{eqBounded}
\end{eqnarray}
is bounded. This fact implies that there exists a $C\in\RR^{+}$ such that 
$\forall\, i,j\in\N_{0}:\;\big\vert\,\lambda_{j}\;(\omega_{i}\,\vert\,\varphi_{j}\otimes\psi_{j})\,\big\vert\leq\vert\lambda_{j}\vert\,C$ holds, 
where the majorant sequence on the right hand side is absolutely summable due to our previous observations. 
Then, Lebesgue's theorem of dominated convergence for the exchange of limits and 
infinite sums on the two-index sequence $\lambda_{j}\;(\omega_{i}\,\vert\,\varphi_{j}\otimes\psi_{j})\in\RR$ ($i,j\in\N_{0}$) 
implies that the real valued sequence $i\mapsto\sum_{j\in\N_{0}} \lambda_{j}\; (\omega_{i} \,\vert\, \varphi_{j}\otimes\psi_{j})$ 
is convergent, the real valued sequence $j\mapsto\lim_{i\in\N_{0}} \,\lambda_{j}\; (\omega_{i} \,\vert\, \varphi_{j}\otimes\psi_{j})$ 
is absolutely summable, moreover 
${\lim_{i\in\N_{0}}\sum_{j\in\N_{0}} \lambda_{j}\; (\omega_{i} \,\vert\, \varphi_{j}\otimes\psi_{j} )} \;=\; {\sum_{j\in\N_{0}} \lim_{i\in\N_{0}} \;\lambda_{j}\; (\omega_{i} \,\vert\, \varphi_{j}\otimes\psi_{j} )}$ holds. 
This finding, in combination with Eq.(\ref{eqLebesgue}), yields that 
the real valued sequence $i\mapsto(\omega_{i}\,\vert\,\Psi)=\sum_{j\in\N_{0}} \lambda_{j}\; (\omega_{i} \,\vert\, \varphi_{j}\otimes\psi_{j} )$ 
is convergent, and that proves the theorem. In order to complete the proof, we show that the set of coefficients Eq.(\ref{eqBounded}) is indeed bounded.

According to the distributional Schwartz kernel theorem, 
$\DD_{m+n}'\equiv\mathcal{L}in(\DD_{m},\DDn')$ (\cite{Treves1970}~Theorem51.7). In this identification, 
by the assumptions of the theorem, the sequence of continuous linear maps 
${\omega_{i}:\,\DD_{m}\rightarrow\DDn'}$ ($i\in\N_{0}$) is convergent 
pointwise to zero, when the target space $\DDn'$ is equipped with the 
weak-* topology. Since $\DDn'$ is Montel, then the pertinent sequence of 
continuous linear maps is also convergent pointwise to zero, when the target space $\DDn'$ 
is equipped with its canonical strong dual topology. Therefore, the set 
of continuous linear maps ${\{\omega_{i}:\DD_{m}\rightarrow\DDn'\,\vert\,i\in\N_{0}\}}$ is 
pointwise bounded. Since the starting space 
$\DD_{m}$ is barrelled (\cite{Treves1970}~ChII.33~Corollary3), by means of Banach--Steinhaus theorem, this pointwise 
bounded set of continuous linear maps is equicontinuous (\cite{Treves1970}~ChII.33~Theorem33.1). 
Therefore, its image of the bounded set $\{\varphi_{j}\,\vert\,j\in\N_{0}\}\subset\DD_{m}$, 
being the set $\big\{(\omega_{i}\,\vert\,\varphi_{j}{\otimes}{\,\cdot})\in\DDn'\;\big\vert\;i,j\in\N_{0}\big\}\subset\DDn'$ 
is bounded (\cite{Rudin1991}~ChI.2~Theorem2.4). This argument can be repeated, 
namely, the elements of $\DDn'$ can be identified with $\DDn\rightarrow\RR$ 
continuous linear maps, and the set of continuous linear maps 
$\big\{(\omega_{i}\,\vert\,\varphi_{j}{\otimes}{\,\cdot}):\DDn\rightarrow\RR\;\big\vert\;i,j\in\N_{0}\big\}$ 
is pointwise bounded by means of our previous observation. 
Since $\DDn$ is barreled, by means of Banach--Steinhaus 
theorem, this pointwise bounded set of continuous linear maps 
is equicontinuous. Therefore, its image of the bounded set $\{\psi_{k}\,\vert\,k\in\N_{0}\}\subset\DDn$, being 
the set $\big\{(\omega_{i}\,\vert\,\varphi_{j}{\otimes}\psi_{k})\in\RR\;\big\vert\;i,j,k\in\N_{0}\big\}\subset\RR$ is 
bounded. Consequently, its subset Eq.(\ref{eqBounded}) is bounded, which completes the proof.
\end{proof}

It is well known that due to the distributional Banach--Steinhaus theorem, whenever a 
sequence of distributions $\omega_{i}\in\DDn'$ ($i\in\N_{0}$) is pointwise convergent over 
$\DDn$, then the pointwise limit mapping itself is a distribution. 
Lemma~\ref{lemotimes} implies that this can be generalized to $\otimes^{n}\DD$, as stated below.

\begin{lemma}[a Banach--Steinhaus-like theorem]
Let $\omega_{i}\in\DDn'$ ($i\in\N_{0}$) be a sequence of distributions 
which is pointwise convergent on the subspace $\otimes^{n}\DD$ of $\DDn$. 
Then, there exists a unique distribution $\Omega\in\DDn'$ such that 
$(\omega_{i}-\Omega)\in\DDn'$ ($i\in\N_{0}$) is pointwise convergent to zero on the full $\DDn$.
\label{lemAlgTensB}
\end{lemma}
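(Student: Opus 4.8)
The plan is to split the statement into two parts: first, to show that the hypothesized pointwise convergence on the algebraic tensor product ${\otimes}^{n}\DD$ already forces pointwise convergence on the whole of $\DDn$; and second, to invoke the distributional Banach--Steinhaus theorem quoted just above, which then produces the limiting distribution $\Omega\in\DDn'$ and makes $(\omega_{i}-\Omega)$ converge to zero pointwise on all of $\DDn$. Uniqueness of $\Omega$ comes for free, since a distribution vanishing on every element of $\DDn$ is zero. The whole content therefore lies in the first part, which I would prove by induction on $n$, using Lemma~\ref{lemotimes} as the engine of the inductive step. The base case $n=1$ is trivial, as ${\otimes}^{1}\DD=\DD=\DDn$ there.

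For the inductive step I assume the extension claim for $n-1$, and take $\omega_{i}\in\DDn'$ converging pointwise on ${\otimes}^{n}\DD$. Using the Schwartz kernel identification $\DDn'\equiv\mathcal{L}in(\DD,\DD_{n-1}{}')$ (\cite{Treves1970}~Theorem51.7), I would read each $\omega_{i}$ as the continuous linear map $f\mapsto(\omega_{i}\,\vert\,f{\otimes}{\,\cdot})\in\DD_{n-1}{}'$. For each fixed $f\in\DD$, the slice $(\omega_{i}\,\vert\,f{\otimes}{\,\cdot})$ is pointwise convergent on ${\otimes}^{n-1}\DD$, because $f{\otimes}g\in{\otimes}^{n}\DD$ whenever $g\in{\otimes}^{n-1}\DD$; the induction hypothesis then upgrades this to pointwise convergence of the slice on the full $\DD_{n-1}$. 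Letting $f$ range over $\DD$, this is precisely pointwise convergence of $\omega_{i}$ on the algebraic tensor product $\DD\otimes\DD_{n-1}=\DD_{1}\otimes\DD_{n-1}$. Now Lemma~\ref{lemotimes}, applied with first index $1$ and second index $n-1$, promotes this to pointwise convergence on the whole $\DD_{1+(n-1)}=\DDn$, closing the induction.

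The step I expect to be the main obstacle is exactly this bootstrapping from the sparse subspace ${\otimes}^{n}\DD$ to the far larger $\DD_{1}\otimes\DD_{n-1}$ needed to feed Lemma~\ref{lemotimes}: the key lemma demands one genuinely full tensor factor $\DD_{n-1}$, which the hypothesis does not hand over directly and which has to be manufactured slice-by-slice. The two ingredients making this possible are the kernel-theorem identification of the partial evaluations $(\omega_{i}\,\vert\,f{\otimes}{\,\cdot})$ as honest distributions on $\DD_{n-1}$, and the induction hypothesis, which is what lets each such slice be completed from ${\otimes}^{n-1}\DD$ to $\DD_{n-1}$. Once both parts are in place, the assembly into the final statement via the quoted Banach--Steinhaus theorem is immediate.
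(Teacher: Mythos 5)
Your proposal is correct and follows essentially the same route as the paper's proof: induction on $n$, slicing off one tensor factor to get pointwise convergence on a mixed tensor product of the form $\DD\otimes\DD_{n-1}$ via the induction hypothesis, then Lemma~\ref{lemotimes} to reach the full $\DDn$, and finally the distributional Banach--Steinhaus theorem. The only (immaterial) difference is that you fix the first variable and apply the key lemma with indices $(1,n-1)$, whereas the paper fixes the last variable and applies it with indices $(n,1)$.
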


\begin{proof}
We prove the theorem by induction. Clearly, the statement is true for $n=1$ due 
to the ordinary distributional Banach--Steinhaus theorem. Assume that the statement of the theorem 
holds for some $n\in\N_{0}$, and take a sequence of distributions 
$\omega_{i}\in\DD_{n+1}'$ ($i\in\N_{0}$) which is pointwise convergent on the 
subspace $\otimes^{n+1}\DD$ of $\DD_{n+1}$. Then, for all $\varphi\in\DD$ the 
sequence of distributions $(\omega_{i}\,\vert\,{\cdot\,}{\otimes}\varphi)\in\DDn'$ ($i\in\N_{0}$) 
is pointwise convergent on the subspace $\otimes^{n}\DD$ of $\DDn$. Therefore, 
by the induction assumption, there exists a unique distribution $\Omega_{\varphi}\in\DDn'$ such that 
$\big(\,(\omega_{i}\,\vert\,{\cdot\,}{\otimes}\varphi)-\Omega_{\varphi}\,\big)\in\DDn'$ ($i\in\N_{0}$) 
converges pointwise to zero on the full $\DDn$. Therefore, the sequence of distributions 
$\omega_{i}\in\DD_{n+1}'$ ($i\in\N_{0}$) is convergent pointwise on the subspace 
$\DDn\otimes\DD$ of $\DD_{n+1}$. By means of Lemma~\ref{lemotimes} it follows 
then that it converges pointwise over the full $\DD_{n+1}$. Applying the distributional 
Banach--Steinhaus theorem it follows that the statement of the theorem also holds 
for $n+1$, which completes the induction.
\end{proof}

\section*{References}

\bibliographystyle{JHEP}
\bibliography{wilsonian}

\end{document}